\newcommand{\red}[1]{\textcolor{red}{#1}}
\let\oldincludegraphics\includegraphics%
\renewcommand{\includegraphics}[2][]{\IfFileExists{#2}{\oldincludegraphics[#1]{#2}}{\red{[FILE NOT FOUND]}}}
\DeclareMathOperator{\E}{E}
\DeclareMathOperator{\Cov}{Cov}
\DeclareMathOperator{\Var}{Var}
\newcommand{\Abs}[1]{\left\lvert#1\right\rvert}
\newcommand{\abs}[1]{\lvert#1\rvert}
\newcommand{\der}{\mathrm{d}}
\newcommand{\Hcal}{\mathcal{H}}
\newcommand{\N}{\mathbb{N}}
\newcommand{\Ncal}{\mathcal{N}}
\newcommand{\parfrac}[2]{\frac{\partial #1}{\partial #2}}
\newcommand{\R}{\mathbb{R}}
\renewcommand{\epsilon}{\varepsilon}
\let\oldleft\left
\let\oldright\right
\renewcommand{\left}{\mathopen{}\mathclose\bgroup\oldleft}
\renewcommand{\right}{\aftergroup\egroup\oldright}
\newtheorem{lemma}{Lemma}
\newtheorem{proposition}{Proposition}
\newif\ifbodyproofs
\title{Learning about a changing state}
\author{%
Benjamin Davies\thanks{
Department of Economics, Stanford University; bldavies@stanford.edu.
I thank Arun Chandrasekhar, Ben Golub, Matt Jackson, and Anirudh Sankar for helpful discussions.
}
}
\date{Draft version: \today}
\begin{document}

\maketitle
\thispagestyle{empty}

\begin{abstract}
    \noindent
    A long-lived Bayesian agent observes costly signals of a time-varying state.
    He chooses the signals' precisions sequentially, balancing their costs and marginal informativeness.
    I compare the optimal myopic and forward-looking precisions when the state follows a Brownian motion.
    I also compare the myopic precisions induced by other Gaussian processes.

    \vskip\baselineskip
    \noindent
    {\itshape JEL classification}: C61, D83\par
    \noindent
    {\itshape Keywords}: information acquisition, Brownian motion, Gaussian processes
\end{abstract}

\clearpage
\setcounter{page}{1}
\section{Introduction}

This paper studies a tractable model of dynamic information acquisition.
I consider a long-lived Bayesian agent who buys independent signals of an unknown state.
In contrast to other models that assume the state is fixed \citep[see, e.g.,][]{Che-Mierendorff-2019-AER,Fudenberg-etal-2018-AER,Hebert-Woodford-2023-JET,Liang-etal-2018-,Liang-etal-2022-ECTA,Moscarini-Smith-2001-ECTA}, I allow it to vary over time.
This aligns my model with many real-world settings:
\begin{itemize}

    \item
    \emph{News consumption}:
    How much attention should we pay to current events, given our knowledge of past events and of news cycle lengths?


    \item
    \emph{Drug development}:
    How much should laboratories invest in testing new vaccines, given their tests of previous vaccines and their beliefs about how viruses mutate over time?

    \item
    \emph{Machine learning}:
    How does ``concept drift,'' wherein the distribution of a target parameter changes over time, affect how much training data is needed and how often it is updated?

    \item
    \emph{Intergenerational advice}:
    How much can we learn from others' experiences (e.g., from parents and mentors), given the difference between their context and ours?

\end{itemize}

Section~\ref{sec:model} describes my model and discusses its assumptions.
The state follows the sample path of a Brownian motion with known drift and scale, but unknown initial value and increments.
The agent learns about the state by observing signals of its value at countably many times.
He uses these signals to form posterior beliefs and take actions with quadratic state-dependent costs.
The agent chooses the actions and signal precisions that minimize the sum of his expected action and information costs.
He makes these choices sequentially and myopically, based on the full history of previous signals but without regard for his future payoffs.

Section~\ref{sec:solution} analyzes the agent's optimal myopic precisions.
Proposition~\ref{prop:solution-brownian} states that his optimal learning strategy has two stages: he waits until information is more beneficial than costly, then buys information at every opportunity to maintain his target posterior variance.

Sections~\ref{sec:planning} and~\ref{sec:other-processes} extend my model in different ways.
Section~\ref{sec:planning} augments the agent's objective to include the present value of his future payoffs.
This preserves his optimal two-stage learning strategy, but changes the time he waits and precisions he chooses.
Intuitively, the more the agent cares about his future payoffs, the more information he buys because it allows him to buy less in the future while maintaining his target posterior variance.

Section~\ref{sec:other-processes} considers other state-generating processes.
I explain how to compute the optimal myopic precisions when the state follows \emph{any} Gaussian process.
I consider two such processes: the Ornstein-Uhlenbeck process and a linear process.
The former induces similar learning dynamics to the Brownian motion: the agent may never buy information, but if he starts then he never stops.
In contrast, if the state is linear in time, and the time between signals is constant, then the agent always starts buying information eventually.
However, he may also stop buying it eventually as he learns the state's initial value and growth rate.

Section~\ref{sec:literature} discusses related literature.
Section~\ref{sec:conclusion} concludes.
Appendix~\ref{sec:proofs} contains proofs of my mathematical claims.

\section{Model}
\label{sec:model}

A Bayesian agent receives signals~$s_1,s_2,\ldots$ of a time-varying state~$\{\theta(t)\}_{t\ge0}$.
He uses these signals to form posterior beliefs and take actions with state-dependent costs.
The agent chooses the signal precisions and actions sequentially to minimize their expected costs.

This section describes the agent's learning and choice environment, and discusses my modeling assumptions.

\subsection{The agent's environment}

\paragraph{Evolution of~$\{\theta(t)\}_{t\ge0}$}

The state follows a Brownian motion with drift~$\mu\in\R$, scale~$\sigma>0$, and initial value~$\theta_0\equiv\theta(0)$.
It solves the stochastic differential equation (SDE)
\begin{equation}
    \der \theta(t)=\mu\der t+\sigma\der W(t), \label{eq:sde-brownian}
\end{equation}
where~$\{W(t)\}_{t\ge0}$ is the sample path of a Wiener process.
This path has initial value~$W(0)=0$, and iid normally distributed increments~$\der W(t)\equiv W(t+\der t)-W(t)$ with mean zero and variance~$\der t$:
\[ \der W(t)\sim\Ncal(0,\der t). \]
The agent knows~$\mu$ and~$\sigma$, but does not know~$\theta_0$ or~$\{W(t)\}_{t\ge0}$.
His prior specifies~$\theta_0\sim\Ncal(0,\sigma_0^2)$ independently of~$\{W(t)\}_{t\ge0}$.
Thus
\[ \theta(t)\sim\Ncal(\mu t,\sigma_0^2+\sigma^2t) \]
and
\begin{equation}
    \Cov(\theta(t),\theta(t'))=\sigma_0^2+\sigma^2\min\{t,t'\} \label{eq:covariance-brownian}
\end{equation}
for all times~$t\ge0$ and~$t'\ge0$.

\paragraph{Learning about~$\{\theta(t)\}_{t\ge0}$}

Each signal
\[ s_n=\theta(t_n)+\epsilon_n \]
estimates the state's value~$\theta(t_n)$ at a time~$t_n\ge0$ that is strictly increasing in~$n\in\N=\{1,2,\ldots\}$.
The error
\[ \epsilon_n\sim\Ncal\left(0,\frac{1}{p_n}\right) \]
in this signal has mean zero and precision~$p_n\equiv1/\Var(\epsilon_n)$, and is independent of~$\{\theta(t)\}_{t\ge0}$ and the errors in other signals.
The agent uses the history~$\Hcal_n\equiv\{s_i\}_{i=1}^n$ of signals received by time~$t_n$ to form posterior beliefs about the realized path~$\{\theta(t)\}_{t\ge0}$ via Bayes' rule.

\begin{figure}[t]
    \centering
    \includegraphics[width=0.75\linewidth]{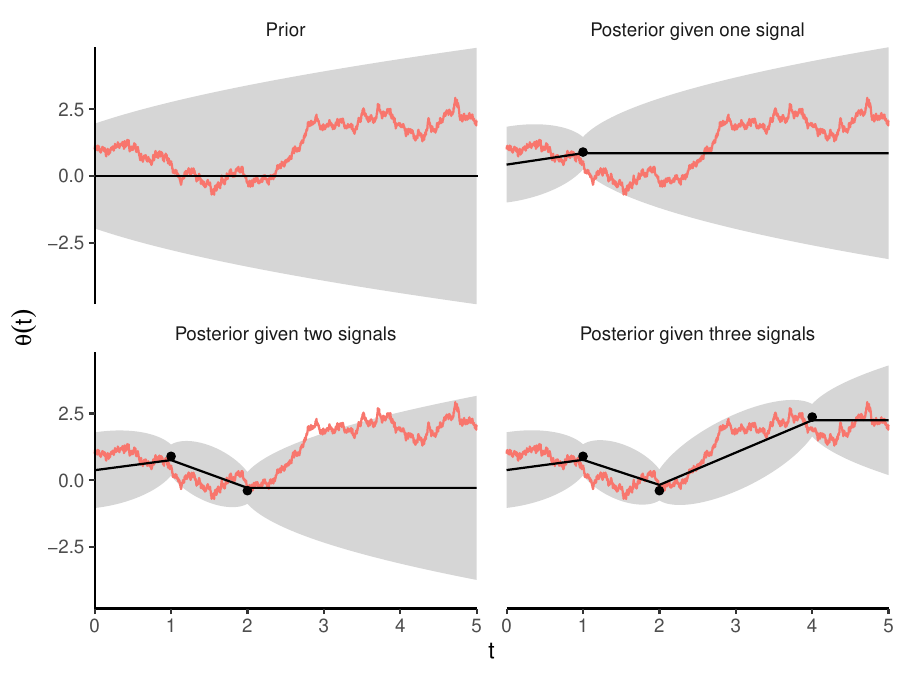}
    \caption{Learning about~$\{\theta(t)\}_{t\ge0}$}
    \label{fig:learning}
    \caption*{
    \footnotesize
    {\itshape Notes:}
    Red lines represent realization of~$\{\theta(t)\}_{t\ge0}$ with~$(\mu,\sigma,\sigma_0)=(0,1,1)$.
    Black dots represent signals~$s_1,s_2,s_3$ with precisions~$p_n=10$.
    Black lines represent prior and posterior means.
    Gray regions represent 95\% confidence intervals, constructed analytically from the prior and posterior means and variances defined in Lemma~\ref{lem:posterior}.
    }
\end{figure}

I illustrate the agent's learning process in Figure~\ref{fig:learning}.
It shows his prior and posterior means after receiving one, two, and three signals.
It also shows the 95\% confidence intervals around these means.
These intervals narrow as~$t$ approaches the times at which the agent receives signals.
The posterior mean is piecewise linear in~$t$ between signals and constant in~$t$ after the last signal.
This is because the random increments~$\der\theta(t)$ are independent, which makes each signal~$s_n$ informative about~$\{\theta(t)\}_{t\ge0}$ only insofar as it is informative about the realized path between~$(0,\theta_0)$ and~$(t_n,\theta(t_n))$.

\paragraph{Choices and payoffs}

At each time~$t_n$, the agent chooses the precision~$p_n\ge0$ and action~$a_n\in\R$ that jointly minimize
\[ \E[(a_n-\theta(t_n))^2\mid\Hcal_n]+cp_n, \]
where~$c>0$ is the marginal cost of information.%
\footnote{
The cost function~$cp_n$ maps to a setting where the agent can buy any non-negative real number of iid signals at a constant per-unit price.
For example, if each signal has precision~$1/\sigma_\epsilon^2\ge0$ and price~$\kappa>0$, then~$m\ge0$ signals have combined precision~$p\equiv m/\sigma_\epsilon^2$ and cost~$\kappa m=cp$ with~$c\equiv\kappa\sigma_\epsilon^2$.
}
His optimal action~$a_n^*=\E[\theta(t_n)\mid\Hcal_n]$ equals the posterior mean of~$\theta(t_n)$ given~$\Hcal_n$.
Therefore, his optimal precision~$p_n^*$ minimizes%
\footnote{
Alternatively, the agent could minimize~$\Var(\theta(t_n)\mid\Hcal_n)$ subject to a budget constraint~$cp_n\le B$ for some~$B>0$ \citep[as in, e.g.,][]{Immorlica-etal-2021-LeibnizInt.Proc.Inform.LIPIcs}.
Then his optimal precision~($B/c$) would depend on the cost of new information but not the value of such information.
I highlight this tension between cost and value by including~$cp_n$ in the agent's objective.
}
\begin{align*}
    \psi_n(p_n)
    &\equiv \E\left[\left(\E[\theta(t_n)\mid\Hcal_n]-\theta(t_n)\right)^2\mid\Hcal_n\right]+cp_n \\
    &= \Var(\theta(t_n)\mid\Hcal_n)+cp_n.
\end{align*}
%
Increasing~$p_n$ makes the signal~$s_n$ of~$\theta(t_n)$ more expensive but more informative.%
\footnote{
Thus~$c$ indexes the cost of information relative to the value of knowing~$\theta(t_n)$ precisely.
}
The value of this information (i.e., the variance reduction it delivers) depends on previous signals' precisions and on the state's variation over time.
This variation makes signals gradually lose explanatory power, making~$\Var(\theta(t_n)\mid\Hcal_n)$ more sensitive to recent signals' precisions.

\subsection{Discussion of modeling assumptions}

\paragraph{Gaussian states and signals}

I assume states~$\{\theta(t_n)\}_{n\in\N}$ and signals~$\{s_n\}_{n\in\N}$ are jointly normally distributed with known means and variances.
Many related studies also make this assumption.
It allows me to derive closed-form expressions for the agent's posterior beliefs.
It also makes posterior variances independent of signals' realizations.
This means the agent knows the \emph{ex post} value of the information he will receive before he receives it.%
\footnote{
This would not be true if, e.g., states and signals were binary.
Then the states' posterior variances would depend on signals' realizations.
}
This knowledge is especially important when he internalizes how his choice of~$p_n$ affects his choices~$p_{n+1},p_{n+2},\ldots$ (see Section~\ref{sec:planning}).

Assuming~$\{\theta(t)\}_{t\ge0}$ follows a Brownian motion allows me to derive closed-form expressions for the optimal precisions~$p_n^*$.
It captures how the state changes over time in response to a continuous stream of unpredictable ``shocks'' (e.g., news events or virus mutations).
Section~\ref{sec:other-processes} discusses the optimal precisions induced by other state-generating processes.

\paragraph{Quadratic payoffs}

I assume actions~$a_n$ have quadratic costs~$(a_n-\theta(t_n))^2$.
This makes optimal actions~$a_n^*=\E[\theta(t_n)\mid\Hcal_n]$ have expected costs~$\E[(a_n^*-\theta(t_n))^2\mid\Hcal_n]=\Var(\theta(t_n)\mid\Hcal_n)$ equal to posterior variances.
Many related studies also make this assumption.
It makes my model tractable and captures the intuition that information is more valuable when it yields more precise beliefs.
\cite{Frankel-Kamenica-2019-AER} formalize this intuition and extend it to other payoff structures.

\paragraph{Linear costs}

I assume signals' costs are linear in their precisions.
\cite{Pomatto-etal-2023-AER} show that linear cost functions uniquely satisfy some attractive properties.
Such functions appear in many sequential sampling models \citep[e.g.,][]{Morris-Strack-2019-,Wald-1945-AnnMathStat,Wald-1947-ECTA} and their continuous-time analogues \citep[e.g.,][]{Fudenberg-etal-2018-AER,Liang-etal-2022-ECTA}.

\paragraph{Countable signals}

I assume the agent receives countably many signals.
This allows me to leverage results about finite collections of Gaussian variables (e.g., Lemmas~\ref{lem:multivariate-normal-conditional} and~\ref{lem:posterior}).
However, my analysis only requires the time steps~$\Delta t_n\equiv t_n-t_{n-1}$ to be strictly positive.
They could be arbitrarily small, allowing my model to approximate a setting where the agent receives information continuously.
If the~$\Delta t_n$ are constant in~$n$, then the sequence~$\{\theta(t_n)\}_{n\in\N}$ is a discrete-time random walk with Gaussian shocks (as studied, e.g., by \cite{Dasaratha-etal-2023-REStud} and \cite{Immorlica-etal-2021-LeibnizInt.Proc.Inform.LIPIcs}).

\paragraph{Myopic preferences}

I assume the agent is myopic in that his optimal precision~$p_n^*$ depends on the previous precisions~$p_1^*,p_2^*,\ldots,p_{n-1}^*$ but not the future precisions~$p_{n+1}^*,p_{n+2}^*,\ldots$.
This makes my model tractable, and captures a setting where the agent
(i)~does not know the times at which he will receive information in the future,
(ii)~has cognitive limitations that prevent him from ``planning ahead,'' or
(iii)~does not care about his future payoffs.
Section~\ref{sec:planning} compares the optimal precisions when the agent is myopic to those when he minimizes the present value of his current and future payoffs.
The myopic precisions are smaller because they ignore the future benefit of being more informed.
However, the agent's optimal learning strategy does not differ qualitatively between the myopic and forward-looking cases.

\section{Optimal myopic precisions}
\label{sec:solution}

At each time~$t_n$, the agent chooses the precision~$p_n^*$ that solves
\begin{equation}
    \min_{p_n}\psi_n(p_n)\ \text{subject to}\ p_n\ge0 \label{eq:problem}
\end{equation}
given his previous choices~$p_1^*,p_2^*,\ldots,p_{n-1}^*$.
Lemma~\ref{lem:solution} characterizes~$p_n^*$ and the resulting posterior variance~$\Var(\theta(t_n)\mid\Hcal_n)$.

\begin{lemma}
    \label{lem:solution}
    Let~$n\in\N$ and let
    \[ R_n\equiv \Var(\theta(t_n)\mid\Hcal_n\setminus\{s_n\}) \]
    denote the variance of~$\theta(t_n)$ that is not explained by the signals in~$\Hcal_n\setminus\{s_n\}$.
    Then the unique solution
    \begin{equation}
        p_n^*=\max\left\{\frac{1}{\sqrt{c}}-\frac{1}{R_n},0\right\} \label{eq:optimal-precision}
    \end{equation}
    to~\eqref{eq:problem} yields posterior variance
    \[ \Var(\theta(t_n)\mid\Hcal_n)=\min\{R_n,\sqrt{c}\}. \]
\end{lemma}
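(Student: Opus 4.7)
The plan is to reduce the minimization of $\psi_n$ to a one-variable convex optimization over $p_n \ge 0$ by first expressing the posterior variance $\Var(\theta(t_n)\mid\Ical_n)$ explicitly in terms of $p_n$ and $R_n$, and then applying standard first-order techniques.

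First, I would invoke the conjugate Gaussian updating rule (from the earlier posterior lemma) to show that, conditional on $\Ical_n \setminus \{s_n\}$, the prior on $\theta(t_n)$ is normal with variance $R_n$, and the additional signal $s_n = \theta(t_n) + \epsilon_n$ with $\epsilon_n \sim \Ncal(0, 1/p_n)$ independent of everything else gives the posterior precision formula
\[ \frac{1}{\Var(\theta(t_n)\mid\Ical_n)} = \frac{1}{R_n} + p_n, \]
so that
\[ \psi_n(p_n) = \frac{R_n}{1 + R_n p_n} + c p_n. \]
The key observation is that this holds for every $p_n \ge 0$ (including $p_n = 0$, where the signal is uninformative), which reduces the problem to minimizing an explicit function of a single nonnegative scalar.

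Next, I would differentiate to obtain $\psi_n'(p_n) = c - R_n^2 / (1 + R_n p_n)^2$ and $\psi_n''(p_n) = 2 R_n^3 / (1 + R_n p_n)^3 > 0$, so $\psi_n$ is strictly convex on $[0,\infty)$, guaranteeing a unique minimizer. Solving $\psi_n'(p_n) = 0$ over $\R$ gives the interior candidate $p_n = 1/\sqrt{c} - 1/R_n$, which is nonnegative iff $R_n \ge \sqrt{c}$. A quick KKT check (or inspection of $\psi_n'(0) = c - R_n^2$) confirms that when $R_n < \sqrt{c}$ the minimizer is the corner $p_n^* = 0$, and combining the two cases yields the formula~\eqref{eq:optimal-precision}.

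Finally, I would substitute $p_n^*$ back into the posterior variance formula. In the interior case this gives $\Var(\theta(t_n)\mid\Ical_n) = R_n / (1 + R_n(1/\sqrt{c} - 1/R_n)) = \sqrt{c}$, and in the corner case it gives $R_n$, so in both cases the variance equals $\min\{R_n, \sqrt{c}\}$. The only real subtlety is the first step: justifying that $s_n$ contributes additively to the posterior precision of $\theta(t_n)$ even after conditioning on the other signals; this is exactly the content of the conjugate-normal lemma cited above, so I would simply flag that step rather than redo the Gaussian conditioning by hand.
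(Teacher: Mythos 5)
Your proposal is correct and follows essentially the same route as the paper: establish the precision-additivity $1/\Var(\theta(t_n)\mid\Ical_n)=1/R_n+p_n$ from conjugate Gaussian updating (the independence of $\epsilon_n$ from the other errors and from $\{\theta(t)\}_{t\ge0}$ is exactly what licenses this after conditioning on $\Ical_n\setminus\{s_n\}$), then minimize the resulting strictly convex function $R_n/(1+R_np_n)+cp_n$ over $p_n\ge0$ via the first-order condition with a corner check at $p_n=0$. The computations (interior root $1/\sqrt{c}-1/R_n$, sign of $\psi_n'(0)$, and back-substitution yielding $\min\{R_n,\sqrt{c}\}$) all check out.
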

\ifbodyproofs\begin{proof}[Proof of Lemma~\ref{lem:solution}]
    Let~$z_n\equiv\theta(t_n)\mid\Hcal_n\setminus\{s_n\}$ and~$s_n'\equiv s_n\mid\Hcal_n\setminus\{s_n\}$ be the components of~$\theta(t_n)$ and~$s_n$ that are not explained by the signals in~$\Hcal_n\setminus\{s_n\}$.
    Then~$\Var(z_n)=R_n$ by definition, while
    \begin{align*}
        \Cov(z_n,s_n')
        &= \Cov(\theta(t_n),s_n\mid\Hcal_n\setminus\{s_n\}) \\
        &= \Cov(\theta(t_n),\theta(t_n)+\epsilon_n\mid\Hcal_n\setminus\{s_n\}) \\
        &= \Var(\theta(t_n)\mid\Hcal_n\setminus\{s_n\}) \\
        &= R_n
    \end{align*}
    and
    \begin{align*}
        \Var(s_n')
        &= \Var(\theta(t_n)+\epsilon_n\mid\Hcal_n\setminus\{s_n\}) \\
        &= \Var(\theta(t_n)\mid\Hcal_n\setminus\{s_n\})+\Var(\epsilon_n) \\
        &= R_n+\frac{1}{p_n}
    \end{align*}
    because the error~$\epsilon_n$ is independent of~$\theta(t_n)$ and~$\Hcal_n\setminus\{s_n\}$.
    So, by Lemma~\ref{lem:multivariate-normal-conditional}, we have
    \begin{align*}
        \Var(\theta(t_n)\mid\Hcal_n)
        &= \Var(z_n\mid s_n') \\
        &= \Var(z_n)-\frac{\left(\Cov(z_n,s_n')\right)^2}{\Var(s_n')} \\
        &= R_n-\frac{R_n^2}{R_n+1/p_n} \\
        &= \frac{R_n}{1+p_nR_n}
    \end{align*}
    and therefore
    \[ \psi_n(p_n)=\frac{R_n}{1+p_nR_n}+cp_n. \]
    Differentiating twice with respect to~$p_n$ gives
    \[ \psi_n'(p_n)=-\left(\frac{R_n}{1+p_nR_n}\right)^2+c \]
    and
    \[ \psi_n''(p_n)=2\left(\frac{R_n}{1+p_nR_n}\right)^3. \]
    If~$R_n=0$ then~$\psi_n'(p_n)>0$ for all~$p_n\in\R$, and so~$\psi_n(p_n)$ attains its constrained minimum at~$p_n=0$.
    So suppose~$R_n>0$.
    Then~$\psi_n(p_n)$ is strictly convex in~$p_n\ge0$ and its unique constrained minimizer equals zero precisely when
    \begin{align*}
        0
        &\le \psi_n'(0) \\
        &= c-R_n^2,
    \end{align*}
    which holds if and only if~$\sqrt{c}\ge R_n$.
    Otherwise, this minimizer is strictly positive and satisfies the first-order condition~$\psi'(p_n)=0$.
    Solving this equation for~$p_n$ yields
    \[ p_n=\frac{1}{\sqrt{c}}-\frac{1}{R_n}. \]
    Equation~\eqref{eq:optimal-precision} follows.
    Moreover,
    \[ \Var(\theta(t_n)\mid\Hcal_n)\big\rvert_{p_n=0}=R_n \]
    and
    \begin{align*}
        \Var(\theta(t_n)\mid\Hcal_n)\big\rvert_{p_n=\frac{1}{\sqrt{c}}-\frac{1}{R_n}}
        &= \frac{R_n}{1+\left(\frac{1}{\sqrt{c}}-\frac{1}{R_n}\right)R_n} \\
        &= \sqrt{c}.\qedhere
    \end{align*}
\end{proof}
\fi

The ``residual variance''~$R_n$ measures the agent's knowledge of~$\theta(t_n)$ before observing~$s_n$.
This variance is smaller when the agent has more prior information.
Indeed, his posterior precision
\[ \frac{1}{\Var(\theta(t_n)\mid\Hcal_n)}=\frac{1}{R_n}+p_n^* \]
equals the sum of his prior (to receiving~$s_n$) and signal precisions.
If~$R_n$ is smaller than~$\sqrt{c}$, then the agent does not buy more information about~$\theta(t_n)$ because its cost outweighs its benefit.
Then
\[ \Var(\theta(t_n)\mid\Hcal_n)\big\rvert_{p_n^*=0}=R_n \]
because the signal~$s_n$ is uninformative.
If~$R_n$ is larger than~$\sqrt{c}$, then the agent buys more information and
\[ \Var(\theta(t_n)\mid\Hcal_n)\big\rvert_{p_n^*>0}=\sqrt{c}. \]
Increasing~$R_n$ or decreasing~$c$ makes new information more net valuable, increasing the optimal precision~$p_n^*$ and decreasing the resulting posterior variance~$\Var(\theta(t_n)\mid\Hcal_n)$.

Lemma~\ref{lem:solution} does not rely on~$\{\theta(t)\}_{t\ge0}$ following a Brownian motion.%
\footnote{
Neither does Lemma~\ref{lem:solution} rely on~$\Hcal_n$ equaling~$\{s_i\}_{i=1}^n$.
It holds whenever~$\Hcal_n$ is a subset of~$\{s_i\}_{i\in\N}$ containing~$s_n$.
}
It holds whenever the states~$\theta(t_1),\theta(t_2),\ldots,\theta(t_n)$ and signals~$s_1,s_2,\ldots,s_n$ are jointly normally distributed.
This property obtains when~$\{\theta(t)\}_{t\ge0}$ follows a Gaussian process.
The Brownian motion is one example of such a process.
I discuss other examples in Section~\ref{sec:other-processes}.

The Brownian motion is also a Markov process: its future values are conditionally independent of its past values given its current value.
Lemma~\ref{lem:residual-variance-brownian} uses this Markov property to characterize the residual variance~$R_n$ of~$\theta(t_n)$ given the signals in~$\Hcal_n\setminus\{s_n\}$.

\begin{lemma}
    \label{lem:residual-variance-brownian}
    Let~$n\in\N$.
    Suppose~$\{\theta(t)\}_{t\ge0}$ follows a Brownian motion with scale~$\sigma>0$ and initial value~$\theta_0\sim\Ncal(0,\sigma_0^2)$.
    Then
    \begin{equation}
        R_n=\begin{cases}
            \sigma_0^2+\sigma^2t_1 & \text{if}\ n=1 \\
            \Var(\theta(t_{n-1})\mid\Hcal_{n-1})+\sigma^2(t_n-t_{n-1}) & \text{if}\ n>1.
        \end{cases} \label{eq:residual-variance-brownian}
    \end{equation}
\end{lemma}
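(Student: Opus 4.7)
The plan is to handle the two cases in \eqref{eq:residual-variance-brownian} separately, both reducing to the fact that increments of the Brownian motion are independent of its past.

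For $n=1$, I would observe that $\Ical_1\setminus\{s_1\}$ is empty, so $R_1$ is simply the prior variance of $\theta(t_1)$. From $\theta(t_1)=\theta_0+\mu t_1+\sigma W(t_1)$, with $\theta_0\sim\Ncal(0,\sigma_0^2)$ independent of $W(t_1)\sim\Ncal(0,t_1)$, I would read off $\Var(\theta(t_1))=\sigma_0^2+\sigma^2 t_1$, matching \eqref{eq:covariance-brownian} at $t=t'=t_1$.

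For $n>1$, I would first note that $\Ical_n\setminus\{s_n\}=\Ical_{n-1}$, so $R_n=\Var(\theta(t_n)\mid\Ical_{n-1})$. I would then decompose
\[ \theta(t_n)=\theta(t_{n-1})+\mu(t_n-t_{n-1})+\sigma\bigl[W(t_n)-W(t_{n-1})\bigr], \]
so the increment $\Delta_n\equiv\theta(t_n)-\theta(t_{n-1})-\mu(t_n-t_{n-1})=\sigma[W(t_n)-W(t_{n-1})]$ has distribution $\Ncal(0,\sigma^2(t_n-t_{n-1}))$ by the definition of the Wiener process. Taking the conditional variance of both sides given $\Ical_{n-1}$, and recalling that the deterministic term $\mu(t_n-t_{n-1})$ contributes nothing, gives
\[ R_n=\Var(\theta(t_{n-1})\mid\Ical_{n-1})+\Var(\Delta_n\mid\Ical_{n-1})+2\Cov(\theta(t_{n-1}),\Delta_n\mid\Ical_{n-1}). \]

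The main obstacle is justifying that the last two terms simplify to $\sigma^2(t_n-t_{n-1})$ and $0$ respectively, i.e., that $\Delta_n$ is independent of $\Ical_{n-1}$. I would argue this by noting that $\Ical_{n-1}$ is a measurable function of $\theta_0$, the Brownian path $\{W(t)\}_{t\le t_{n-1}}$, and the independent errors $\epsilon_1,\dots,\epsilon_{n-1}$. The independent-increments property of the Wiener process makes $W(t_n)-W(t_{n-1})$ independent of $\{W(t)\}_{t\le t_{n-1}}$, and by construction it is independent of $\theta_0$ and of each $\epsilon_i$. Hence $\Delta_n$ is independent of $\Ical_{n-1}$, so its conditional variance coincides with its unconditional variance $\sigma^2(t_n-t_{n-1})$ and the conditional covariance vanishes. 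Substituting yields the recursion in \eqref{eq:residual-variance-brownian}, completing the argument.
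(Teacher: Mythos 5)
Your argument is correct. For $n=1$ the claim is immediate from the prior variance, and for $n>1$ your decomposition $\theta(t_n)=\theta(t_{n-1})+\mu(t_n-t_{n-1})+\sigma[W(t_n)-W(t_{n-1})]$ together with the independence of the increment from $(\theta_0,\{W(t)\}_{t\le t_{n-1}},\epsilon_1,\dots,\epsilon_{n-1})$ yields the recursion. One small point to tighten: to conclude that $\Cov(\theta(t_{n-1}),\Delta_n\mid\Ical_{n-1})=0$ you need $\Delta_n$ to be independent of the pair $(\theta(t_{n-1}),\Ical_{n-1})$ jointly, not merely of $\Ical_{n-1}$; your own measurability argument already delivers this, since $\theta(t_{n-1})$ is likewise a function of $\theta_0$ and $\{W(t)\}_{t\le t_{n-1}}$, so you should just say so explicitly. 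Your route differs from the paper's: the paper works through its general Gauss--Markov lemma (Lemma~\ref{lem:residual-variance-markov}), which uses the law of total variance and the Gaussian conditional-expectation formula to obtain $R_n=\beta_n^2\Var(\theta(t_{n-1})\mid\Ical_{n-1})+\gamma_n$, and then specializes to the Brownian covariance~\eqref{eq:covariance-brownian}, where $\Cov(\theta(t_n),\theta(t_{n-1}))=\Var(\theta(t_{n-1}))$ forces $\beta_n=1$ and $\gamma_n=\sigma^2(t_n-t_{n-1})$. Your approach is more elementary and makes the role of independent increments transparent, but it exploits structure special to the Brownian motion ($\beta_n=1$, i.e., the state equals its previous value plus independent noise); the paper's approach is heavier but reusable for any Gauss--Markov process, which is exactly how it is deployed for the Ornstein--Uhlenbeck case in Proposition~\ref{prop:solution-ou}.
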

\ifbodyproofs\input{proofs/residual-variance-bronwian}\fi

The residual variance~$R_1$ equals the prior variance of~$\theta(t_1)$ because~$\Hcal_1\setminus\{s_1\}$ is empty.
For each~$n>1$, equation~\eqref{eq:residual-variance-brownian} decomposes~$R_n$ as the sum of two terms:
the posterior variance of~$\theta(t_{n-1})$ given signals~$s_1$ through~$s_{n-1}$, and
the variance of the integral
\[ \int_{t_{n-1}}^{t_n}\der\theta(t)=\mu(t_n-t_{n-1})+\sigma\int_{t_{n-1}}^{t_n}\der W(t) \]
of the random increments between times~$t_{n-1}$ and~$t_n$.
Intuitively, the residual variance~$R_n$ measures how accurately the agent can forecast~$\theta(t_n)$ based on his knowledge of~$\theta(t_{n-1})$.
His forecast is more accurate when~$\Var(\theta(t_{n-1})\mid\Hcal_{n-1})$ and~$\sigma^2(t_n-t_{n-1})$ are smaller, since then his knowledge of the state is more precise and ``up-to-date.''

Combining Lemmas~\ref{lem:solution} and~\ref{lem:residual-variance-brownian} yields a closed-form expression for the optimal precision~$p_n^*$:

\begin{proposition}
    \label{prop:solution-brownian}
    Let~$n\in\N$.
    Suppose~$\{\theta(t)\}_{t\ge0}$ follows a Brownian motion with scale~$\sigma>0$ and initial value~$\theta_0\sim\Ncal(0,\sigma_0^2)$.
    Define
    \[ \overline{t}\equiv\frac{\sqrt{c}-\sigma_0^2}{\sigma^2}. \]
    If~$p_i=p_i^*$ for each~$i<n$, then
    \begin{equation}
        p_n^*=\begin{cases}
            0 & \text{if}\ t_n\le\overline{t} \\
            \frac{1}{\sqrt{c}}-\frac{1}{\sigma_0^2+\sigma^2t_1} & \text{if}\ n=1\ \text{and}\ \overline{t}<t_1 \\
            \frac{1}{\sqrt{c}}-\frac{1}{\sigma_0^2+\sigma^2t_n} & \text{if}\ n>1\ \text{and}\ \overline{t}<t_n\le\overline{t}+(t_n-t_{n-1}) \\
            \frac{1}{\sqrt{c}}-\frac{1}{\sqrt{c}+\sigma^2(t_n-t_{n-1})} & \text{if}\ n>1\ \text{and}\ \overline{t}+(t_n-t_{n-1})<t_n
        \end{cases}
    \end{equation}
    and
    \[ \Var(\theta(t_n)\mid\Hcal_n)=\min\{\sigma_0^2+\sigma^2t_n,\sqrt{c}\}. \]
\end{proposition}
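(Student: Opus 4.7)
The plan is to prove the proposition by induction on $n$, combining Lemma~\ref{lem:solution} with the recursion from Lemma~\ref{lem:residual-variance-brownian}. The key observation is that the four cases in the statement correspond to a simple partition of the possibilities: either $t_n \le \overline{t}$ (the ``waiting'' regime), or $t_{n-1} \le \overline{t} < t_n$ (the transition step, which absorbs the $n=1$ sub-case), or $t_{n-1} > \overline{t}$ (the ``maintenance'' regime). Reading the condition $\overline{t} < t_n \le \overline{t}+(t_n-t_{n-1})$ from the proposition as $t_{n-1}\le\overline{t}<t_n$, and the condition $\overline{t}+(t_n-t_{n-1})<t_n$ as $t_{n-1}>\overline{t}$, makes this partition clear. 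I also propose to track, as an invariant, the closed-form for the posterior variance, namely $\Var(\theta(t_n)\mid\Ical_n)=\min\{\sigma_0^2+\sigma^2t_n,\sqrt{c}\}$, which I need for the inductive step anyway.

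For the base case $n=1$, Lemma~\ref{lem:residual-variance-brownian} gives $R_1=\sigma_0^2+\sigma^2 t_1$, so by the definition of $\overline{t}$ we have $R_1\le\sqrt{c}$ iff $t_1\le\overline{t}$. Plugging into Lemma~\ref{lem:solution} recovers the first row (precision zero, variance $\sigma_0^2+\sigma^2 t_1$) or the second row (Case~2, variance $\sqrt{c}$) of the proposition, and verifies the posterior-variance invariant.

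For the inductive step, I would use Lemma~\ref{lem:residual-variance-brownian} to write $R_n=\Var(\theta(t_{n-1})\mid\Ical_{n-1})+\sigma^2(t_n-t_{n-1})$ and substitute the inductive hypothesis $\Var(\theta(t_{n-1})\mid\Ical_{n-1})=\min\{\sigma_0^2+\sigma^2 t_{n-1},\sqrt{c}\}$. If $t_{n-1}\le\overline{t}$, the minimum equals $\sigma_0^2+\sigma^2 t_{n-1}$ and $R_n=\sigma_0^2+\sigma^2 t_n$; splitting on whether $t_n\le\overline{t}$ (so $R_n\le\sqrt{c}$) or $t_n>\overline{t}$ (so $R_n>\sqrt{c}$) and applying Lemma~\ref{lem:solution} yields Case~1 or Case~3, respectively. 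If instead $t_{n-1}>\overline{t}$, the minimum equals $\sqrt{c}$ and $R_n=\sqrt{c}+\sigma^2(t_n-t_{n-1})>\sqrt{c}$, which gives Case~4. In each branch I would then check that the resulting $\min\{R_n,\sqrt{c}\}$ matches $\min\{\sigma_0^2+\sigma^2 t_n,\sqrt{c}\}$, closing the induction.

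The main obstacle, and really the only one, is administrative: making sure the three interval conditions on $t_n$ in the statement are reindexed as the three conditions on the pair $(t_{n-1},t_n)$ that naturally arise from Lemma~\ref{lem:residual-variance-brownian}, and that the $n=1$ case slots cleanly into the same framework (with the empty set convention making $R_1$ a degenerate instance of the recursion). Everything else is algebra on the $\max$ and $\min$ expressions in Lemma~\ref{lem:solution}.
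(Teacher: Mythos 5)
Your proposal is correct and follows essentially the same route the paper takes: induction on $n$, using Lemma~\ref{lem:residual-variance-brownian} to express $R_n$ via the inductive posterior-variance invariant $\Var(\theta(t_{n-1})\mid\Ical_{n-1})=\min\{\sigma_0^2+\sigma^2t_{n-1},\sqrt{c}\}$ and then applying Lemma~\ref{lem:solution}, with the case conditions on $t_n$ correctly decoded as $t_{n-1}\le\overline{t}<t_n$ and $\overline{t}<t_{n-1}$.
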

\ifbodyproofs\begin{proof}[Proof of Proposition~\ref{prop:solution-brownian}]
    Suppose~$n>1$.
    The Brownian motion is a Gauss-Markov process, so we can use Lemma~\ref{lem:residual-variance-markov} to characterize~$R_n$.
    Define~$\beta_n$ and~$\gamma_n$ as in that lemma.
    Then~$\beta_n=1$ and~$\gamma_n=\sigma^2(t_n-t_{n-1})$ from the proof of Lemma~\ref{lem:residual-variance-brownian}.
    So
    \[ \prod_{j=0}^{n-2}\beta_{n-j}^2=1 \]
    and
    \begin{align*}
        \sum_{j=0}^{n-2}\gamma_{n-j}\prod_{k=0}^{j-1}\beta_{n-k}^2
        &= \sum_{j=0}^{n-2}\sigma^2(t_{n-j}-t_{n-j-1}) \\
        &= \sigma^2(t_n-t_1),
    \end{align*}
    and similarly
    \[ \prod_{j=0}^{i-1}\beta_{n-j}^2=1 \]
    and
    \[ \sum_{j=0}^{i-1}\gamma_{n-j}\prod_{k=0}^{j-1}\beta_{n-k}^2=\sigma^2(t_n-t_{n-i}) \]
    for each~~$i<n$.
    Thus, if~$p_i=p_i^*$ for each~$i<n$, then
    \begin{align*}
        R_n
        &= \min\left\{\Var(\theta(t_1))\prod_{j=0}^{n-2}\beta_{n-j}^2+\sum_{j=0}^{n-2}\gamma_{n-j}\prod_{k=0}^{j-1}\beta_{n-k}^2\right\}\cup\left\{\sqrt{c}\prod_{j=0}^{i-1}\beta_{n-j}^2+\sum_{j=0}^{i-1}\gamma_{n-j}\prod_{k=0}^{j-1}\beta_{n-k}^2\right\}_{i=1}^{n-1} \\
        &= \min\left\{\left(\sigma_0^2+\sigma^2t_1\right)+\sigma^2(t_n-t_1)\right\}\cup\left\{\sqrt{c}+\sigma^2(t_n-t_{n-i})\right\}_{i=1}^{n-1} \\
        &= \min\left\{\sigma_0^2+\sigma^2t_n,\sqrt{c}+\sigma^2(t_n-t_{n-1})\right\}
    \end{align*}
    by Lemma~\ref{lem:residual-variance-markov}.
    The result follows from Lemma~\ref{lem:solution} and the definition of~$\overline{t}$.
\end{proof}
\fi

The agent's optimal learning strategy is as follows:
He does not buy any information until~$t_n$ exceeds~$\overline{t}$.
Then he buys informative signals to maintain his target posterior variance~$\sqrt{c}$.
These signals' precisions optimally balance the cost of information with the benefit of explaining some of the variance of the state that accumulates between times~$t_{n-1}$ and~$t_n$.
This benefit is large enough to ensure~$p_n^*>0$ for all times~$t_n>\overline{t}$.
Thus, the agent may wait arbitrarily long to start buying information, but once he starts he never stops.

If the timesteps~$t_n-t_{n-1}=\Delta t>0$ are constant in~$n$, then the optimal ``steady-state'' precision
\[ \lim_{n\to\infty}p_n^*=\frac{1}{\sqrt{c}}-\frac{1}{\sqrt{c}+\sigma^2\Delta t} \]
and payoff
\[ \lim_{n\to\infty}\psi_n(p_n^*)=2\sqrt{c}-\frac{c}{\sqrt{c}+\sigma^2\Delta t} \]
are both increasing in~$\sigma^2\Delta t$.
Intuitively, if the state accumulates more variance between consecutive times, then the agent has to buy more information to obtain his target posterior variance~$\sqrt{c}$.

\section{Optimal forward-looking precisions}
\label{sec:planning}

This section considers the case when the agent cares about his future payoffs.
At each time~$t_n$, he chooses the precisions~$p_n,p_{n+1},\ldots\ge0$ that minimize the present value
\begin{equation}
    \Psi_n((p_i)_{i\ge n})\equiv \sum_{i=n}^\infty\delta^{t_i-t_n}\left(\Var(\theta(t_i)\mid\Hcal_i)+cp_i\right) \label{eq:planning-present-value}
\end{equation}
of his time~$t_n,t_{n+1},\ldots$ payoffs, where~$\delta\in(0,1)$ is his subjective discount factor.
For simplicity, I assume~$t_{n+1}=t_n+1$ for each~$n\in\N$.
Then~\eqref{eq:planning-present-value} depends on~$n$ only via the residual variance~$R_n$.%
\footnote{
\label{fn:planning-present-value}%
This is because
\begin{align*}
    R_{i+1}
    &= \Var(\theta(t_i)\mid\Hcal_i)+\sigma^2 \\
    &= \left(\frac{1}{R_i}+p_i\right)^{-1}+\sigma^2
\end{align*}
for each~$i\ge n$ by Lemma~\ref{lem:residual-variance-brownian} and the definition of~$R_i$.
So the dynamics of~$(R_i)_{i\ge n}$ and~$(p_i)_{i\ge n}$ only depend on their initial values, which only depend on~$n$ via~$R_n$.
}%
\textsuperscript{,}\footnote{
Letting~$t_{n+1}-t_n=\Delta t\not=1$ for each~$n\in\N$, and replacing~$\sigma^2$ with~$\sigma^2/\Delta t$ and~$\delta$ with~$\delta^{\Delta t}$, yields the same insights.
}
This allows me to solve the agent's problem
\begin{equation}
    \min_{(p_i)_{i\ge n}}\Psi_n\left((p_i)_{i\ge n}\right)\ \text{subject to}\ p_i\ge0\ \text{for each}\ i\ge n \label{eq:planning-problem}
\end{equation}
by modeling it as a Markov decision process \citep{Bellman-1957-JMM}, where the residual variances~$(R_i)_{i\ge n}$ serve as state variables and the precisions~$(p_i)_{i\ge n}$ as control variables.

Proposition~\ref{prop:planning-solution} characterizes the solution to~\eqref{eq:planning-problem}.
It implies a similar optimal learning strategy to when the agent is myopic: he waits until information is more beneficial than costly, then buys more information at every opportunity.
But the time he waits to start buying and the amount he buys once he starts depend on his discount factor~$\delta$.
Intuitively, the more the agent cares about his future payoffs, the more he is willing to pay for information because the more he benefits from being informed in the future.

\begin{proposition}
    \label{prop:planning-solution}
    Suppose~$\{\theta(t)\}_{t\ge0}$ follows a Brownian motion with scale~$\sigma>0$ and initial value~$\theta_0\sim\Ncal(0,\sigma_0^2)$.
    Let~$t_{n+1}=t_n+1$ for each~$n\in\N$, let~$V>0$ solve
    \begin{equation}
        \frac{1}{c}=\frac{1}{V^2}-\frac{\delta}{(V+\sigma^2)^2}, \label{eq:planning-V}
    \end{equation}
    and define~$\overline{t}\equiv(V-\sigma_0^2)/\sigma^2$.
    Then the sequence~$(p_n^\dag)_{n\in\N}$ defined by
    \[ p_n^\dag=\begin{cases}
        0 & \text{if}\ t_n\le\overline{t} \\
        \frac{1}{V}-\frac{1}{\sigma_0^2+\sigma^2t_1} & \text{if}\ n=1\ \text{and}\ \overline{t}<t_1 \\
        \frac{1}{V}-\frac{1}{\sigma_0^2+\sigma^2t_n} & \text{if}\ n>1\ \text{and}\ \overline{t}<t_n\le\overline{t}+1 \\
        \frac{1}{V}-\frac{1}{V+\sigma^2} & \text{if}\ n>1\ \text{and}\ \overline{t}+1<t_n
    \end{cases} \]
    for each~$n\in\N$ solves~\eqref{eq:planning-problem}.
    Moreover, if~$p_i=p_i^\dag$ for each~$i\le n$ then
    \[ \Var(\theta(t_n)\mid\Hcal_n)=\min\left\{\sigma_0^2+\sigma^2t_n,V\right\}. \]
\end{proposition}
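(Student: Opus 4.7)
The plan is to recast \eqref{eq:planning-problem} as an infinite-horizon stationary Markov decision process with state $R_n$ and control $p_n \ge 0$. By Lemma~\ref{lem:residual-variance-brownian} (and $t_{n+1} - t_n = 1$), the transition is $R_{n+1} = (1/R_n + p_n)^{-1} + \sigma^2$ and the stage cost is $(1/R_n + p_n)^{-1} + cp_n$, so the value function $\Phi$ satisfies the Bellman equation
\[ \Phi(R) = \min_{p \ge 0}\left\{ \frac{1}{1/R + p} + cp + \delta\, \Phi\!\left(\frac{1}{1/R + p} + \sigma^2\right) \right\}. \]

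I would first show that \eqref{eq:planning-V} pins down a unique $V > 0$: the map $V \mapsto 1/V^2 - \delta/(V+\sigma^2)^2$ is strictly decreasing on $(0,\infty)$ (using $\delta < 1$), ranges from $+\infty$ down to $0$, and crosses $1/c$ exactly once. Then I would conjecture the stationary policy $p^*(R) = \max\{1/V - 1/R,\, 0\}$: wait while $R \le V$, and otherwise buy just enough to cut posterior variance to $V$. Tracking the induced trajectory from $R_1 = \sigma_0^2 + \sigma^2 t_1$ reproduces the piecewise formula for $p_n^\dag$---the three buying subcases correspond to the first purchase at $n = 1$, the first purchase at some $n > 1$, and the recurrent state $R_n = V + \sigma^2$ that follows thereafter---and delivers the posterior-variance expression in the proposition.

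The substantive work is verifying that this candidate attains the Bellman minimum. In the buying region $R > V$, I would guess $\Phi(R) = K - c/R$, pin down $K$ by imposing stationary consistency at the recurrent state $R = V + \sigma^2$, and check the interior first-order condition. Because the three summands in the Bellman right-hand side are each convex in $p$, the FOC is sufficient; at $(1/R+p)^{-1} = V$ and using the envelope identity $\Phi'(R) = c/R^2$ on this region, the FOC collapses to \eqref{eq:planning-V} by construction of $V$.

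The main obstacle is the waiting region $R \le V$, where $p^* = 0$ is a corner solution and $\Phi$ has no single closed form. The relevant KKT condition at $p = 0$ is $c \ge R^2(1 + \delta\, \Phi'(R + \sigma^2))$. When $R + \sigma^2 > V$, substituting $\Phi'(R+\sigma^2) = c/(R+\sigma^2)^2$ reduces this to $1/c \le 1/R^2 - \delta/(R+\sigma^2)^2$, which by Step~1 holds precisely when $R \le V$. For deeper waiting, I would iterate $\Phi'(R) = 1 + \delta\, \Phi'(R + \sigma^2)$ back to a boundary point and show the KKT inequality propagates inward by induction; alternatively, I would invoke the standard MDP verification theorem that, since my candidate $\Phi$ satisfies the Bellman equation pointwise on $(0,\infty)$ and the Bellman operator is a contraction under discounting, its unique fixed point coincides with the true value function, making the candidate policy optimal.
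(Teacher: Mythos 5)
Your proposal is correct and follows essentially the same route as the paper: recast \eqref{eq:planning-problem} as a Markov decision process with the residual variance as the state, guess the value function of the form $K-c/R$ on the buying region, and verify the Bellman equation region by region, with the interior first-order condition reducing exactly to \eqref{eq:planning-V}. The steps you leave compressed---sufficiency of the first-order condition and propagation of the corner condition through the deep waiting region---do go through, since $\Phi'$ is continuous at the kinks (a consequence of the definition of $V$) and the inductive bound $\Phi'(R)\le c/R^2$ for $R\le V$ closes exactly as you sketch.
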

\ifbodyproofs\begin{proof}[Proof of Proposition~\ref{prop:planning-solution}]
    As stated in the text, the present value~\eqref{eq:planning-present-value} depends on~$n$ only via the residual variance~$R_n$ (see Footnote~\ref{fn:planning-present-value}).
    Let~$\Psi(R_n)$ denote this present value when the agent chooses the precisions~$p_n,p_{n+1},\ldots$ that solve~\eqref{eq:planning-problem}.
    Then
    \begin{align}
        \Psi(R_n)
        &= \min_{p_n\ge0}\left\{\left(\frac{1}{R_n}+p_n\right)^{-1}+cp_n+\delta\Psi\left(\left(\frac{1}{R_n}+p_n\right)^{-1}+\sigma^2\right)\right\} \notag \\
        &= \min_{V_n\in(0,R_n]}\left\{V_n+c\left(\frac{1}{V_n}-\frac{1}{R_n}\right)+\delta\Psi\left(V_n+\sigma^2\right)\right\}, \label{eq:planning-solution-bellman}
    \end{align}
    where the posterior variance
    \begin{align*}
        V_n
        &\equiv \Var(\theta(t_n)\mid\Hcal_n) \\
        &= \left(\frac{1}{R_n}+p_n\right)^{-1}
    \end{align*}
    is an invertible function of~$p_n$ given~$R_n$ and so the change of variables is well-defined.
    We want to find a function~$\Psi:(0,\infty)\to(0,\infty)$ satisfying the Bellman equation~\eqref{eq:planning-solution-bellman}.
    This is equivalent to finding a function~$\Phi:(0,\infty)\to(0,\infty)$ mapping residual variances to optimal posterior variances.
    
    Assume~$\Psi$ and~$\Phi$ are continuous and piecewise differentiable.
    (One can verify these properties using arguments provided in Section~3.3 of \citet{Stokey-Lucas-1989-}.)

    Now~$V_n\le R_n$ with equality if and only if~$p_n=0$.
    But choosing~$p_n=0$ is optimal if and only if the cost of buying information exceeds its value, which occurs precisely when~$R_n$ is sufficiently small.
    So there exists~$R>0$ such that~$\Phi(R_n)=R_n$ for all~$R_n<R$ and~$\Phi(R_n)<R_n$ for all~$R_n>R$.

    Suppose~$R_n>R$.
    Then the constraint~$V_n\le R_n$ on the RHS of~\eqref{eq:planning-solution-bellman} is non-binding, and so
    \begin{align*}
        \Psi'(R_n)
        &= \parfrac{}{R_n}\min_{V_n\in(0,R_n]}\left\{V_n+c\left(\frac{1}{V_n}-\frac{1}{R_n}\right)+\delta\Psi\left(V_n+\sigma^2\right)\right\} \\
        &= \frac{c}{R_n^2}
    \end{align*}
    by the envelope theorem.
    So the optimal posterior variance satisfies the first-order condition
    \begin{align*}
        0
        &= \parfrac{}{V_n}\left\{V_n+c\left(\frac{1}{V_n}-\frac{1}{R_n}\right)+\delta\Psi\left(V_n+\sigma^2\right)\right\} \\
        &= 1-\frac{c}{V_n^2}+\delta\Psi'(V_n+\sigma^2) \\
        &= 1-c\left(\frac{1}{V_n^2}-\frac{\delta}{(V_n+\sigma^2)^2}\right),
    \end{align*}
    which is uniquely solved by~$V$.
    Thus~$\Phi(R_n)=V$ for all~$R_n>R$, from which it follows that~$R=V$ since~$\Phi$ is continuous.
    So the solution to~\eqref{eq:planning-problem} induces residual variances~$R_n,R_{n+1},\ldots$ that satisfy
    \begin{align*}
        R_{i+1}
        &= \Phi(R_i)+\sigma^2 \\
        &= \min\{R_i+\sigma^2,V+\sigma^2\}
    \end{align*}
    for each~$i\ge n$.
    Therefore, if the agent chooses the precisions that solve~\eqref{eq:planning-problem} for each~$n\in\N$, then for each~$n>1$ we have
    \begin{align*}
        R_n
        &= \min\{R_{n-1}+\sigma^2,V+\sigma^2\} \\ 
        &\hspace{0.6em} \vdots \\
        &= \min\{R_1+\sigma^2(n-1)\}\cup\left\{V+\sigma^2i\right\}_{i=1}^{n-1} \\
        &= \min\{\sigma_0^2+\sigma^2t_n,V+\sigma^2\}.
    \end{align*}
    It follows that, for each~$n\in\N$, the optimal precision
    \begin{align*}
        \frac{1}{\min\{R_n,V\}}-\frac{1}{R_n}
        &= \frac{1}{\min\{\sigma_0^2+\sigma^2t_n,V\}}-\frac{1}{\min\{\sigma_0^2+\sigma^2t_n,V+\sigma^2\}} \\
        &= \begin{cases}
            0 & \text{if}\ \sigma_0^2+\sigma^2t_n<V \\
            \frac{1}{V}-\frac{1}{\sigma_0^2+\sigma^2t_n} & \text{if}\ V\le\sigma_0^2+\sigma^2t_n<V+\sigma^2 \\
            \frac{1}{V}-\frac{1}{V+\sigma^2} & \text{if}\ V+\sigma^2\le\sigma_0^2+\sigma^2t_{n-1}
        \end{cases} \\
        &= p_n^\dag
    \end{align*}
    induces posterior variance
    \begin{align*}
        \Var(\theta(t_n)\mid\Hcal_n)
        &= \min\{R_n,V\} \\
        &= \min\{\sigma_0^2+\sigma^2t_n,V\}.\qedhere
    \end{align*}
\end{proof}
\fi

The ``steady-state'' posterior variance~$V$ converges to~$\sqrt{c}$ from below as~$\delta\to0$.
In this limit, the forward-looking precision~$p_n^\dag$ equals the myopic precision~$p_n^*$.
But if~$\delta>0$ then~$V<\sqrt{c}$, and so~$p_n^\dag\ge p_n^*$ with equality if and only if~$p_n^\dag=0$.
Indeed, differentiating~\eqref{eq:planning-V} with respect to~$\delta$ gives
\[ 0=-2\left(\frac{1}{V^3}-\frac{\delta}{(V+\sigma^2)^3}\right)\parfrac{V}{\delta}-\frac{1}{(V+\sigma^2)^2}, \]
from which it follows that~$\partial V/\partial\delta<0$.
Then
\begin{align*}
    \parfrac{p_n^\dag}{\delta}
    &= -\parfrac{V}{\delta}\begin{cases}
        0 & \text{if}\ t_n\le\overline{t} \\
        \frac{1}{V^2} & \text{if}\ n=1\ \text{and}\ \overline{t}<t_1 \\
        \frac{1}{V^2} & \text{if}\ n>1\ \text{and}\ \overline{t}<t_n\le\overline{t}+1 \\
        \frac{1}{V^2}-\frac{1}{(V+\sigma^2)^2} & \text{if}\ n>1\ \text{and}\ \overline{t}+1<t_n
    \end{cases} \\
    &\ge 0
\end{align*}
with equality if and only if~$t_n\le\overline{t}$.
So increasing~$\delta$ lowers the steady-state posterior variance~$V$, making the agent better off, but raises the steady-state optimal precision~$p_n^\dag$, making him worse off.
The former effect dominates: differentiating the steady-state payoff
\[ \lim_{n\to\infty}\psi_n(p_n^\dag)=V+c\left(\frac{1}{V}-\frac{1}{V+\sigma^2}\right)\]
with respect to~$\delta$ gives
\begin{align*}
    \parfrac{}{\delta}\lim_{n\to\infty}\psi_n(p_n^\dag)
    &= \left[1-c\left(\frac{1}{V^2}-\frac{1}{(V+\sigma^2)^2}\right)\right]\parfrac{V}{\delta} \\
    &\overset{\star}{=} \frac{c(1-\delta)}{(V+\sigma^2)^2}\parfrac{V}{\delta} \\
    &< 0,
\end{align*}
where~$\star$ holds by~\eqref{eq:planning-V}.
Thus, in the long run, the agent is better off at each time~$t_n$ when he cares more about his future selves.
This is because the cost of buying more information is smaller than the benefit of having more information bought by his \emph{past} selves.
Thus, internalizing the future value of information makes the agent better off because he can ``share the load'' of staying up-to-date with his past selves.

\section{Other state-generating processes}
\label{sec:other-processes}

This section extends my analysis from Brownian motions to other Gaussian processes.
Such processes can be uniquely defined by
\begin{enumerate}

    \item[(i)]
    a function mapping times~$t\ge0$ to prior means~$\E[\theta(t)]$, and

    \item[(ii)]
    a function mapping pairs of times~$t\ge0$ and~$t'\ge0$ to covariances~$\Cov(\theta(t),\theta(t'))$.

\end{enumerate}
The Brownian motion described in Section~\ref{sec:model} is the Gaussian process characterized by~$\E[\theta(t)]=\mu t$ and~$\Cov(\theta(t),\theta(t'))=\sigma_0^2+\sigma^2\min\{t,t'\}$.
Different covariance functions lead to different residual variances~$R_n$ and, by Lemma~\ref{lem:solution}, different optimal myopic precisions~$p_n^*$.
This is because~$R_n$ depends on how the signals in~$\Hcal_n\setminus\{s_n\}$ covary with each other and with the state's value~$\theta(t_n)$.
I describe this dependence in Lemma~\ref{lem:residual-variance-arbitrary}.
It defines~$R_n$ as a function of the state's autocovariances and signals' precisions when~$\{\theta(t)\}_{t\ge0}$ follows an arbitrary Gaussian process.

\begin{lemma}
    \label{lem:residual-variance-arbitrary}
    Suppose~$\{\theta(t)\}_{t\ge0}$ follows a Gaussian process.
    Let~$n\in\N$, let~$G_n\in\R^{n\times n}$ be the covariance matrix with~${ij}^\text{th}$ entry
    \[ \left[G_n\right]_{ij}=\Cov(\theta(t_i),\theta(t_j)), \]
    and let~$D_n\in(\R\cup\{\infty\})^{n\times n}$ be the diagonal matrix with~${ii}^\text{th}$ entry
    \[ \left[D_n\right]_{ii}=\begin{cases}
        \infty & \text{if}\ i<n\ \text{and}\ p_i=0 \\
        1/p_i & \text{if}\ i<n\ \text{and}\ p_i>0 \\
        0 & \text{if}\ i=n.
    \end{cases} \]
    Then~$R_1=\Var(\theta(t_1))$.
    Moreover, if~$n>1$ and~$(G_n+D_n)$ is invertible, then
    \[ \frac{1}{R_n}=\left[\left(G_n+D_n\right)^{-1}\right]_{nn}. \]
\end{lemma}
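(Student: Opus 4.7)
The plan is to recognize that $(s_1, \ldots, s_{n-1}, \theta(t_n))$ is a jointly Gaussian vector whose covariance matrix is exactly $G_n + D_n$, and then to read off both $R_n$ and $[(G_n + D_n)^{-1}]_{nn}$ from this covariance matrix using two standard facts: the Gaussian conditioning formula and the Schur-complement identity for block-matrix inversion.

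The $n = 1$ case is immediate, since $\Ical_1 \setminus \{s_1\} = \emptyset$ makes conditioning vacuous and $R_1 = \Var(\theta(t_1)) = [G_1]_{11}$. For $n > 1$ I first reduce to the case where every $p_i$ with $i < n$ is strictly positive. A signal with $p_i = 0$ has noise of unbounded variance that is independent of everything else, so it carries no information and can be dropped from $\Ical_n \setminus \{s_n\}$ without changing $R_n$; the invertibility hypothesis on $(G_n + D_n)$ effectively excludes this case anyway, since $[D_n]_{ii} = \infty$ then.

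With all $p_i > 0$, I compute the covariance matrix of $X = (s_1, \ldots, s_{n-1}, \theta(t_n))$ entrywise. Using $s_i = \theta(t_i) + \epsilon_i$ with $\epsilon_i$ independent of $\{\theta(t)\}_{t \ge 0}$ and of one another, the $(i, j)$ entry equals $\Cov(\theta(t_i), \theta(t_j)) + \delta_{ij}/p_i$ for $i, j < n$, equals $\Cov(\theta(t_i), \theta(t_n))$ for $i < n$ and $j = n$, and equals $\Var(\theta(t_n))$ for $i = j = n$; by construction this is exactly $G_n + D_n$. Partitioning
\[ G_n + D_n = \begin{pmatrix} A & b \\ b^\top & c \end{pmatrix} \]
with $A \in \R^{(n-1) \times (n-1)}$, $b \in \R^{n-1}$, and $c = [G_n]_{nn}$, Lemma~\ref{lem:multivariate-normal-conditional} yields $R_n = c - b^\top A^{-1} b$, while the Schur-complement identity for block-matrix inversion gives $[(G_n + D_n)^{-1}]_{nn} = (c - b^\top A^{-1} b)^{-1}$. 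Combining these two expressions proves the claim.

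The only real subtlety is the notational treatment of uninformative signals, which is handled by the reduction in the second paragraph; beyond that, the argument is a direct assembly of a covariance matrix followed by two textbook identities about Gaussians and partitioned matrices.
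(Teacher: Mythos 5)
Your proposal is correct and follows essentially the same route the paper takes: identify $G_n+D_n$ as the covariance matrix of $(s_1,\ldots,s_{n-1},\theta(t_n))$, apply the Gaussian conditioning formula (Lemma~\ref{lem:multivariate-normal-conditional}, as packaged in Lemma~\ref{lem:posterior}) to get $R_n$ as the Schur complement $c-b^\top A^{-1}b$, and invoke the block-inversion identity to equate that with $1/[(G_n+D_n)^{-1}]_{nn}$. Your explicit handling of the $p_i=0$ entries by dropping uninformative signals is a sensible reading of the $\infty$ convention and does not change the substance of the argument.
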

\ifbodyproofs\begin{proof}[Proof of Lemma~\ref{lem:residual-variance-arbitrary}]
    Now~$R_1=\Var(\theta(t_1))$ by definition, so suppose~$n>1$.
    Define~$\Sigma_n$ and~$w_{nt}$ as in Lemma~\ref{lem:posterior}, define~$w_n\equiv w_{nt_n}$, and let~$e_n\in\R^n$ be the~$n^\text{th}$ standard basis vector.
    Then
    \[ \Var(\theta(t_n)\mid\Hcal_n)=\Var(\theta(t_n))-w_n^T\Sigma_n^{-1}w_n \]
    whenever
    \[ \Sigma_n=G_n+D_n+\frac{1}{p_n}e_ne_n^T \]
    is invertible.
    If~$(G_n+D_n)$ is invertible, then the Sherman-Morrison formula \citep{Bartlett-1951-AoMS} yields
    \[ \Sigma_n^{-1}=(G_n+D_n)^{-1}-\frac{\frac{1}{p_n}(G_n+D_n)^{-1}e_ne_n^T(G_n+D_n)^{-1}}{1+\frac{1}{p_n}e_n^T(G_n+D_n)^{-1}e_n} \]
    and hence
    \begin{align*}
        \Var(\theta(t_n)\mid\Hcal_n)
        &= \Var(\theta(t_n))-w_n^T\left((G_n+D_n)^{-1}-\frac{(G_n+D_n)^{-1}e_ne_n^T(G_n+D_n)^{-1}}{p_n+e_n^T(G_n+D_n)^{-1}e_n}\right)w_n \\
        &= \Var(\theta(t_n))-w_n^T(G_n+D_n)^{-1}w_n+\frac{\left(w_n^T(G_n+D_n)^{-1}e_n\right)^2}{p_n+e_n^T(G_n+D_n)^{-1}e_n}.
    \end{align*}
    But~$w_n=(G_n+D_n)e_n$ by the definitions of~$G_n$ and~$D_n$, and so
    \begin{align*}
        w_n^T(G_n+D_n)^{-1}w_n
        &= w_n^Te_n \\
        &= \Var(\theta(t_n))
    \end{align*}
    and similarly~$w_n^T(G_n+D_n)^{-1}e_n=1$.
    Thus
    \begin{align*}
        R_n
        &= \Var(\theta(t_n)\mid\Hcal_n\setminus\{s_n\}) \\
        &= \lim_{p_n\to0}\Var(\theta(t_n)\mid\Hcal_n) \\
        &= \lim_{p_n\to0}\frac{1}{p_n+e_n^T(G_n+D_n)^{-1}e_n} \\
        &= \frac{1}{\left[(G_n+D_n)^{-1}\right]_{nn}}.\qedhere
    \end{align*}
\end{proof}
\fi

Combining Lemmas~\ref{lem:solution} and~\ref{lem:residual-variance-arbitrary} allows me to compute the optimal myopic precisions~$p_n^*$ recursively when~$\{\theta(t)\}_{t\ge0}$ follows \emph{any} Gaussian process.
The remainder of this section discusses two such processes.

\subsection{The Ornstein-Uhlenbeck process}
\label{sec:ou}

Suppose~$\{\theta(t)\}_{t\ge0}$ follows an Ornstein-Uhlenbeck (OU) process: it solves the SDE
\[ \der\theta(t)=-\alpha\theta(t)\der t+\sigma\der W(t), \]
where the parameter~$\alpha>0$ controls how quickly~$\{\theta(t)\}_{t\ge0}$ reverts to its mean, and where~$\sigma>0$ and~$\{W(t)\}_{t\ge0}$ are defined as in Section~\ref{sec:model}.
Choosing~$\alpha=\sigma^2/2\sigma_0^2$ yields~$\E[\theta(t)]=0$ and
\begin{equation}
    \Cov(\theta(t),\theta(t'))=\sigma_0^2\exp\left(-\frac{\sigma^2}{2\sigma_0^2}\abs{t-t'}\right) \label{eq:covariance-ou}
\end{equation}
for all~$t\ge0$ and~$t'\ge0$ \citep[see][p. 358]{Karatzas-Shreve-1988-}.
This choice yields a stationary process with prior variances~$\Var(\theta(t))=\sigma_0^2$ that do not vary with~$t$.
In contrast, if~$\{\theta(t)\}_{t\ge0}$ follows a Brownian motion then it is not stationary because then the prior variances~$\Var(\theta(t))=\sigma_0^2+\sigma^2t$ grow with~$t$.

Proposition~\ref{prop:solution-ou} characterizes the optimal myopic precisions~$p_n^*$ when~$\{\theta(t)\}_{t\ge0}$ follows an OU process.
These precisions are strictly positive if and only if the prior variance~$\sigma_0^2$ exceeds the target posterior variance~$\sqrt{c}$.
The agent either
(i)~never buys information or
(ii)~buys information at every opportunity.
He never waits to buy information because the prior variance never changes.
This contrasts with when~$\{\theta(t)\}_{t\ge0}$ follows a Brownian motion and the agent waits until information is more beneficial than costly to start buying it.

\begin{proposition}
    \label{prop:solution-ou}
    Suppose~$\{\theta(t)\}_{t\ge0}$ follows an OU process with scale~$\sigma>0$, initial value~$\theta_0\sim\Ncal(0,\sigma_0^2)$, and mean-reversion parameter~$\alpha=\sigma^2/2\sigma_0^2$.
    Define
    \begin{equation}
        \rho_n \equiv \exp\left(-\frac{\sigma^2}{2\sigma_0^2}(t_n-t_{n-1})\right) \label{eq:solution-ou-correlation}
    \end{equation}
    for each~$n\in\N\setminus\{1\}$.
    Then
    \[ p_n^*=\begin{cases}
        0 & \text{if}\ \sqrt{c}\ge\sigma_0^2 \\
        \frac{1}{\sqrt{c}}-\frac{1}{\sigma_0^2} & \text{if}\ n=1\ \text{and}\ \sqrt{c}<\sigma_0^2 \\
        \frac{1}{\sqrt{c}}-\frac{1}{\rho_n^2\sqrt{c}+(1-\rho_n^2)\sigma_0^2} & \text{if}\ n>1\ \text{and}\ \sqrt{c}<\sigma_0^2
    \end{cases} \]
    for each~$n\in\N$.
\end{proposition}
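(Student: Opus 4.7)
My plan is to apply Lemma~\ref{lem:solution} and reduce the problem to computing the residual variance~$R_n$ at each step. Since~$\Ical_1\setminus\{s_1\}$ is empty and the choice~$\alpha=\sigma^2/(2\sigma_0^2)$ makes the OU process stationary with~$\Var(\theta(t))=\sigma_0^2$ for all~$t$, I would note immediately that~$R_1=\sigma_0^2$. Together with~\eqref{eq:optimal-precision}, this already delivers the~$n=1$ cases of the proposition and pins the posterior variance at time~$t_1$ to~$\min\{\sigma_0^2,\sqrt{c}\}$.

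For~$n>1$ I would derive a recursion for~$R_n$ using the Markov property of the OU process. Applying the bivariate-normal conditional-mean formula to~$(\theta(t_n),\theta(t_{n-1}))$, which by~\eqref{eq:covariance-ou} has correlation~$\rho_n$ and equal marginal variances~$\sigma_0^2$, yields the decomposition~$\theta(t_n)=\rho_n\theta(t_{n-1})+\eta_n$ with~$\eta_n\sim\Ncal(0,\sigma_0^2(1-\rho_n^2))$ independent of~$\{\theta(t_i)\}_{i<n}$. Since signal errors are also independent of the state,~$\eta_n$ is independent of~$\Ical_{n-1}$, so the law of total variance gives
\[ R_n=\rho_n^2\Var(\theta(t_{n-1})\mid\Ical_{n-1})+\sigma_0^2(1-\rho_n^2). \]
I would then induct on~$n$, splitting on whether~$\sqrt{c}\ge\sigma_0^2$. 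In the first case, Lemma~\ref{lem:solution} gives~$p_1^*=0$ and posterior variance~$\sigma_0^2$, and the recursion inductively preserves~$R_n=\sigma_0^2\le\sqrt{c}$, so~$p_n^*=0$ for every~$n$. In the second case,~$R_1>\sqrt{c}$ forces~$p_1^*=1/\sqrt{c}-1/\sigma_0^2$ with posterior variance~$\sqrt{c}$; plugging~$\Var(\theta(t_{n-1})\mid\Ical_{n-1})=\sqrt{c}$ into the recursion yields~$R_n=\rho_n^2\sqrt{c}+(1-\rho_n^2)\sigma_0^2$, which lies strictly above~$\sqrt{c}$ because~$\rho_n^2\in(0,1)$ and~$\sigma_0^2>\sqrt{c}$. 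Substituting this~$R_n$ into~\eqref{eq:optimal-precision} recovers the stated formula and keeps the posterior variance at~$\sqrt{c}$, completing the induction.

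The main obstacle is justifying the recursion rigorously: I need~$\eta_n$ to be independent not merely of~$\theta(t_{n-1})$ but of the full~$\sigma$-algebra generated by~$\Ical_{n-1}$. This follows from the Markov property of the OU process together with the assumed independence of signal errors from the state, but it is the one step where I would slow down and be explicit. An alternative route would be to invoke Lemma~\ref{lem:residual-variance-arbitrary} and compute the~$(n,n)$ entry of~$(G_n+D_n)^{-1}$ directly, but exploiting the exponential structure of~\eqref{eq:covariance-ou} through the Markov argument seems substantially cleaner than manipulating the full Gram matrix.
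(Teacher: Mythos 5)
Your proof is correct and follows essentially the same route as the paper: reduce everything to the residual variance via Lemma~\ref{lem:solution}, derive the Gauss--Markov recursion $R_n=\rho_n^2\Var(\theta(t_{n-1})\mid\Ical_{n-1})+\sigma_0^2(1-\rho_n^2)$, and induct on whether $\sqrt{c}\ge\sigma_0^2$. The recursion you derive by hand (including the independence of $\eta_n$ from $\Ical_{n-1}$ that you rightly flag) is exactly the specialization of the paper's ancillary Lemma~\ref{lem:residual-variance-markov} to the stationary OU case with $\beta_n=\rho_n$ and $\gamma_n=\sigma_0^2(1-\rho_n^2)$, so you could simply cite that lemma instead of reproving it.
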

\ifbodyproofs\begin{proof}[Proof of Proposition~\ref{prop:solution-ou}]
    We first derive an expression for~$R_n$.
    Now~$R_1=\Var(\theta(t_1))=\sigma_0^2$ by definition, so suppose~$n>1$.
    The OU process is a Gauss-Markov process, so we can use Lemma~\ref{lem:residual-variance-markov} to characterize~$R_n$.
    Define~$\beta_n$ and~$\gamma_n$ as in that lemma.
    Then
    \begin{align*}
        \beta_n^2
        &= \left(\frac{\sigma_0^2\exp\left(-\frac{\sigma^2}{2\sigma_0^2}\abs{t_n-t_{n-1}}\right)}{\sigma_0^2}\right)^2 \\
        &= \exp\left(-\frac{\sigma^2}{\sigma_0^2}(t_n-t_{n-1})\right)
    \end{align*}
    and
    \begin{align*}
        \gamma_n
        &\overset{\star}{=} \Var(\theta(t_n))-\frac{\left(\Cov(\theta(t_n),\theta(t_{n-1}))\right)^2}{\Var(\theta(t_{n-1}))} \\
        &= \sigma_0^2-\frac{\left(\sigma_0^2\exp\left(-\frac{\sigma^2}{2\sigma_0^2}\abs{t_n-t_{n-1}}\right)\right)^2}{\sigma_0^2} \\
        &= \left[1-\exp\left(-\frac{\sigma^2}{\sigma_0^2}(t_n-t_{n-1})\right)\right]\sigma_0^2,
    \end{align*}
    where~$\star$ holds by Lemma~\ref{lem:multivariate-normal-conditional}.
    So
    \begin{align*}
        \prod_{j=0}^{n-2}\beta_{n-j}^2
        &= \prod_{j=0}^{n-2}\exp\left(-\frac{\sigma^2}{\sigma_0^2}(t_n-t_{n-1})\right) \\
        &= \exp\left(-\frac{\sigma^2}{\sigma_0^2}\sum_{j=0}^{n-2}(t_{n-j}-t_{n-j-1})\right) \\
        &= \exp\left(-\frac{\sigma^2}{\sigma_0^2}(t_n-t_1)\right)
    \end{align*}
    and
    \begin{align*}
        \sum_{j=0}^{n-2}\gamma_{n-j}\prod_{k=0}^{j-1}\beta_{n-k}^2
        &= \sum_{j=0}^{n-2}\left[1-\exp\left(-\frac{\sigma^2}{\sigma_0^2}(t_{n-j}-t_{n-j-1})\right)\right]\sigma_0^2\prod_{k=0}^{j-1}\exp\left(-\frac{\sigma^2}{\sigma_0^2}(t_{n-k}-t_{n-k-1})\right) \\
        &= \sigma_0^2\sum_{j=0}^{n-2}\left[1-\exp\left(-\frac{\sigma^2}{\sigma_0^2}(t_{n-j}-t_{n-j-1})\right)\right]\exp\left(-\frac{\sigma^2}{\sigma_0^2}(t_{n}-t_{n-j})\right) \\
        &= \sigma_0^2\sum_{j=0}^{n-2}\left[\exp\left(-\frac{\sigma^2}{\sigma_0^2}(t_{n}-t_{n-j})\right)-\exp\left(-\frac{\sigma^2}{\sigma_0^2}(t_n-t_{n-j-1})\right)\right] \\
        &= \sigma_0^2\left[\exp\left(-\frac{\sigma^2}{\sigma_0^2}(t_{n}-t_{n})\right)-\exp\left(-\frac{\sigma^2}{\sigma_0^2}(t_n-t_1)\right)\right] \\
        &= \sigma_0^2\left(1-\exp\left(-\frac{\sigma^2}{\sigma_0^2}(t_n-t_1)\right)\right),
    \end{align*}
    and similarly
    \[ \prod_{j=0}^{i-1}\beta_{n-j}^2=\exp\left(-\frac{\sigma^2}{\sigma_0^2}(t_n-t_{n-i})\right) \]
    and
    \[ \sum_{j=0}^{i-1}\gamma_{n-j}\prod_{k=0}^{j-1}\beta_{n-k}^2=\sigma_0^2\left(1-\exp\left(-\frac{\sigma^2}{\sigma_0^2}(t_n-t_{n-i})\right)\right) \]
    for each~~$i<n$.
    Thus, if~$p_i=p_i^*$ for each~$i<n$, then
    \begin{align*}
        R_n
        &= \min\left\{\Var(\theta(t_1))\prod_{j=0}^{n-2}\beta_{n-j}^2+\sum_{j=0}^{n-2}\gamma_{n-j}\prod_{k=0}^{j-1}\beta_{n-k}^2\right\}\cup\left\{\sqrt{c}\prod_{j=0}^{i-1}\beta_{n-j}^2+\sum_{j=0}^{i-1}\gamma_{n-j}\prod_{k=0}^{j-1}\beta_{n-k}^2\right\}_{i=1}^{n-1} \\
        &= \min\left\{\sigma_0^2\right\}\cup\left\{\sqrt{c}\exp\left(-\frac{\sigma^2}{\sigma_0^2}(t_n-t_{n-i})\right)+\sigma_0^2\left(1-\exp\left(-\frac{\sigma^2}{\sigma_0^2}(t_n-t_{n-i})\right)\right)\right\}_{i=1}^{n-1} \\
        &= \min\left\{\sigma_0^2,\rho_n^2\sqrt{c}+(1-\rho_n^2)\sigma_0^2\right\}
    \end{align*}
    by Lemma~\ref{lem:residual-variance-markov} and the definition of~$\rho_n$.
    The result follows from Lemma~\ref{lem:solution}.
\end{proof}\fi

The parameter~\eqref{eq:solution-ou-correlation} equals the Pearson correlation of~$\theta(t_n)$ and~$\theta(t_{n-1})$.
The square of~$\rho_n$ measures how much of the variance of~$\theta(t_n)$ is explained by the variance of~$\theta(t_{n-1})$.
Increasing~$\rho_n^2$ makes the agent more able to forecast~$\theta(t_n)$ using his estimate of~$\theta(t_{n-1})$, leading him to demand less information and lowering his optimal precision~$p_n^*$.

\subsection{A linear process}

If~$\{\theta(t)\}_{t\ge0}$ follows a Brownian motion or an OU process with~$\sigma_0^2>\sqrt{c}$, then the optimal myopic precision~$p_n^*$ remains bounded away from zero even as the number~$\Abs{\Hcal_n\setminus\{s_n\}}=n-1$ of previous signals grows without bound.
This is because the agent remains willing to buy information about shocks to the state between consecutive time steps.
However, if~$\{\theta(t)\}_{t\ge0}$ is linear in~$t$, then~$p_n^*$ can become arbitrarily small as~$n$ becomes arbitrarily large.
To see why, suppose~$\theta(t)$ is linear in~$t$:
\[ \theta(t)=\theta_0+\beta t \]
for some unknown intercept~$\theta_0\in\R$ and slope~$\beta\in\R$.
Assuming~$\theta_0\sim\Ncal(0,\sigma_0^2)$ and~$\beta\sim\Ncal(\mu,\sigma^2)$ are independently distributed under the agent's prior, the state~$\{\theta(t)\}_{t\ge0}$ follows a Gaussian process with prior mean $\E[\theta(t)]=\mu t$ and covariance
\begin{equation}
    \Cov(\theta(t),\theta(t'))=\sigma_0^2+\sigma^2tt' \label{eq:covariance-linear}
\end{equation}
for all~$t\ge0$ and~$t'\ge0$.
This process differs from the Brownian motion only in the specification of~$\Cov(\theta(t),\theta(t'))$, where the product of~$t$ and~$t'$ replaces their minimum.
Yet this replacement leads to substantively different learning patterns.

\begin{figure}[!t]
    \centering
    \includegraphics[width=0.75\linewidth]{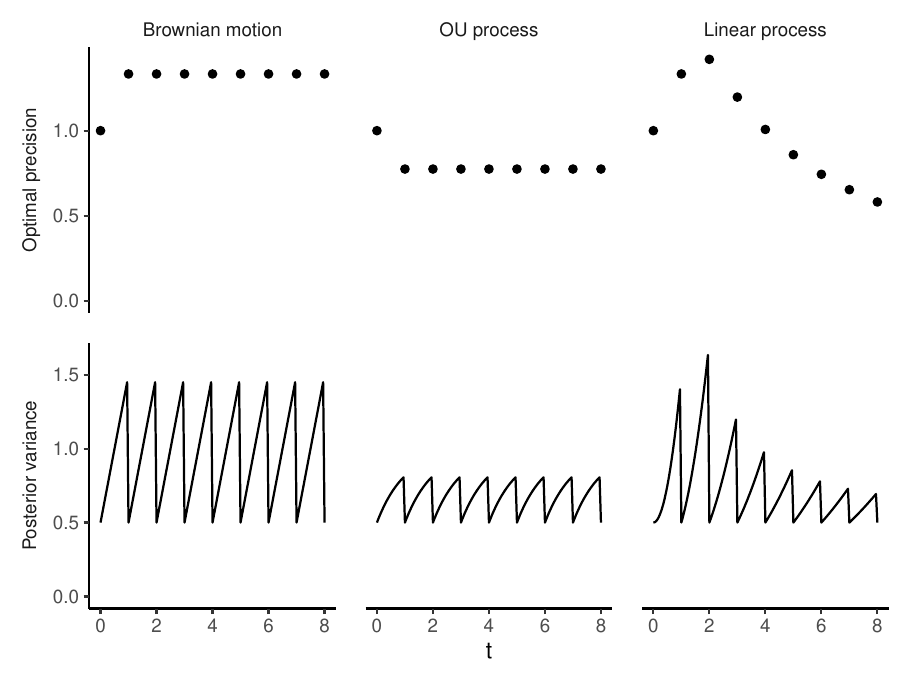}
    \caption{Optimal precisions and posterior variances}
    \label{fig:precisions-variances}
    \caption*{
    \footnotesize
    {\itshape Notes:}
    Points in top row represent optimal myopic precisions~$p_n^*$ computed via Lemmas~\ref{lem:solution} and~\ref{lem:residual-variance-arbitrary}.
    Lines in bottom row represent posterior variances~$\Var(\theta(t)\mid\Hcal_n)$ with~$t\in[t_n,t_{n+1})$ and~$p_n=p_n^*$ for each~$n\in\N$.
    Left/middle/right column corresponds to case~(i)/(ii)/(iii) defined in the text.
    All cases have~$(\sigma_0,\sigma,c)=(1,1,0.25)$ and~$t_n=n-1$ for each~$n\in\N$.
    }
\end{figure}
I illustrate these patterns in Figure~\ref{fig:precisions-variances}.
It shows the optimal myopic precisions~$p_n^*$ and resulting posterior variances when the state follows
\begin{enumerate}

    \item[(i)]
    the Brownian motion defined in Section~\ref{sec:model},

    \item[(ii)]
    the OU process defined in Section~\ref{sec:ou}, and
    
    \item[(iii)]
    the linear process defined above.

\end{enumerate}
I choose~$(\sigma_0,\sigma,c)=(1,1,0.25)$ and~$t_n=n-1$ in all three cases.
This choice makes the first precision~$p_1^*=1$ and the target posterior variance~$\sqrt{c}=0.5$ constant across cases.
But the subsequent precisions~$p_2^*,p_3^*,\ldots$ are not constant across cases.
In case~(i) we have~$R_n=\sqrt{c}+\sigma^2(t_n-t_{n-1})=3/2$ and~$p_n^*=1/\sqrt{c}-1/R_n=4/3$ for each~$n\ge2$, while in case~(ii) we have
\begin{align*}
    R_n
    &= \exp\left(-\frac{\sigma^2}{\sigma_0^2}(t_n-t_{n-1})\right)\sqrt{c}+\left(1-\exp\left(-\frac{\sigma^2}{\sigma_0^2}(t_n-t_{n-1})\right)\right)\sigma_0^2 \\
    &= 1-0.5\exp(-1) \\
    &\approx 0.632
\end{align*}
and~$p_n^*\approx0.775$ for each~$n\ge2$.
These precisions~$p_n^*$ are case-wise constant in~$n$ because the Brownian motion and OU processes are Markov processes, and so the agent faces the same problem~\eqref{eq:problem} at each time~$t_n$.
In contrast, in case~(iii), the optimal myopic precisions~$p_n^*$ generically vary with~$n$ and eventually decrease with~$n$.
Intuitively, each signal with~$p_n>0$ provides information about the initial value~$\theta_0$ and growth rate~$\beta$, decreasing the residual variance
\begin{align*}
    R_n
    &= \Var(\theta_0+\beta t_n\mid\Hcal_n\setminus\{s_n\}) \\
    &= \Var(\theta_0\mid\Hcal_n\setminus\{s_n\})+2t_n\Cov(\theta_0,\beta\mid\Hcal_n\setminus\{s_n\})+t_n^2\Var(\beta\mid\Hcal_n\setminus\{s_n\})
\end{align*}
of~$\theta(t_n)$ and lowering the value of new information.
So as~$n\to\infty$, the agent's posterior on~$(\theta_0,\beta)$ becomes arbitrarily precise and his demand for information about~$\theta(t_n)=\theta_0+\beta t_n$ becomes arbitrarily small.%
\footnote{
This intuition relies on the time steps~$\Delta t_n\equiv t_n-t_{n-1}$ satisfying a regularity condition that ensures the posterior variance~$\Var(\beta\mid\Hcal_n\setminus\{s_n\})$ shrinks faster than~$t_n^2$ grows; otherwise, the residual variance~$R_n$ grows without bound and so~$p_n^*$ remains bounded away from zero as~$n\to\infty$.
I defer identifying this regularity condition for future research.
}

\section{Related literature}
\label{sec:literature}

This paper fits into the literature on dynamic information acquisition.
Many papers in that literature \citep[e.g.,][]{Bolton-Harris-1999-ECTA,Che-Mierendorff-2019-AER,Fudenberg-etal-2018-AER,Hebert-Woodford-2023-JET,Liang-etal-2018-,Liang-etal-2022-ECTA,Moscarini-Smith-2001-ECTA} study the optimal strategy for learning about a fixed state before taking an action at a (possibly endogenous) time.
In contrast, this paper studies the optimal strategy for learning about a changing state before taking actions at many times.
\cite{Immorlica-etal-2021-LeibnizInt.Proc.Inform.LIPIcs} also study such a strategy in a framework similar to mine.
But they focus on allocating a ``precision budget'' across possible action times, whereas I focus on the tension between the cost and value of information at every action time.

This focus is common in papers on rational inattention \citep{Caplin-Dean-2015-AER,Mackowiak-etal-2023-JEL,Sims-2003-JME}.
The closest such paper is by \cite{Steiner-etal-2017-ECTA}.
They consider an agent who observes signals of a time-varying state before taking actions at discrete times.
\citeauthor{Steiner-etal-2017-ECTA} assume the cost of each signal is proportional to the expected entropy reduction it delivers.
This cost function is prior-dependent, which complicates \citeauthor{Steiner-etal-2017-ECTA}'s dynamic analysis because each signal changes future beliefs and, thus, future costs.
In contrast, I assume the cost of each signal is proportional to its precision.
This cost function is prior-independent, allowing me to solve the agent's dynamic problem analytically (see Section~\ref{sec:planning}).

\cite{Barilla-2025-} also studies dynamic information acquisition with a time-varying state.
He considers an agent who buys signals of a binary state at endogenous times.
In contrast, I consider an agent who buys signals of a continuous state at exogenous times.
Both \cite{Barilla-2025-} and \cite{Steiner-etal-2017-ECTA} assume actions have arbitrary payoffs, whereas I assume they have quadratic payoffs.

My ``Gaussian-quadratic'' setup is decidedly tractable, which explains its use in many related papers.
Three such papers are by \cite{Colombo-etal-2014-REStud}, \cite{Hellwig-Veldkamp-2009-REStud}, and \cite{Myatt-Wallace-2012-REStud}.
They study agents' choices of how much to learn about a payoff-relevant state and the externalities these choices impose on other agents.
Likewise, the agent in my model makes choices that impose externalities on his future selves.
However, I allow the payoff-relevant state to vary across these selves.
Indeed, my model could also be used to study a setting where different agents want to learn different states.
I assume these states covary in a single dimension---time---but one could extend my analysis to states that covary in many dimensions.

This many-agent interpretation of my analysis connects it to the literature on social learning.
A close paper in that literature is by \cite{Dasaratha-etal-2023-REStud}.
They consider a sequence of agents who observe private signals and others' actions before estimating a time-varying state.
They assume signals are costless and that agents have finite memories.
In contrast, I assume signals are costly and that the agent has an infinite memory.

\section{Conclusion}
\label{sec:conclusion}

This paper studies optimal learning about a time-varying state.
I assume the state follows a Gaussian process, and that a long-lived Bayesian agent receives Gaussian signals and takes actions with quadratic state-dependent costs.
These assumptions make my model tractable and yield simple expressions for the optimal signal precisions.
I discuss how these precisions depend on whether the agent is myopic or forward-looking, and on process driving the state's evolution.

Natural extensions of my analysis are to non-Gaussian processes, multi-dimensional processes, and non-quadratic payoffs.
One could also consider an agent with mis-specified priors or limited memory.
I defer these extensions to future research.

{
\raggedright
\bibliographystyle{apalike}
\bibliography{references}

\begin{thebibliography}{}

\bibitem[Barilla, 2025]{Barilla-2025-}
Barilla, C. (2025).
\newblock When and what to learn in a changing world.

\bibitem[Bartlett, 1951]{Bartlett-1951-AoMS}
Bartlett, M.~S. (1951).
\newblock An {{Inverse Matrix Adjustment Arising}} in {{Discriminant Analysis}}.
\newblock {\em Annals of Mathematical Statistics}, 22(1):107--111.

\bibitem[Bellman, 1957]{Bellman-1957-JMM}
Bellman, R. (1957).
\newblock A {{Markovian Decision Process}}.
\newblock {\em Journal of Mathematics and Mechanics}, 6(5):679--684.

\bibitem[Bishop, 2006]{Bishop-2006-}
Bishop, C.~M. (2006).
\newblock {\em Pattern recognition and machine learning}.
\newblock Springer, New York.

\bibitem[Bolton and Harris, 1999]{Bolton-Harris-1999-ECTA}
Bolton, P. and Harris, C. (1999).
\newblock Strategic {{Experimentation}}.
\newblock {\em Econometrica}, 67(2):349--374.

\bibitem[Caplin and Dean, 2015]{Caplin-Dean-2015-AER}
Caplin, A. and Dean, M. (2015).
\newblock Revealed {{Preference}}, {{Rational Inattention}}, and {{Costly Information Acquisition}}.
\newblock {\em American Economic Review}, 105(7):2183--2203.

\bibitem[Che and Mierendorff, 2019]{Che-Mierendorff-2019-AER}
Che, Y.-K. and Mierendorff, K. (2019).
\newblock Optimal {{Dynamic Allocation}} of {{Attention}}.
\newblock {\em American Economic Review}, 109(8):2993--3029.

\bibitem[Colombo et~al., 2014]{Colombo-etal-2014-REStud}
Colombo, L., Femminis, G., and Pavan, A. (2014).
\newblock Information {{Acquisition}} and {{Welfare}}.
\newblock {\em Review of Economic Studies}, 81(4):1438--1483.

\bibitem[Dasaratha et~al., 2023]{Dasaratha-etal-2023-REStud}
Dasaratha, K., Golub, B., and Hak, N. (2023).
\newblock Learning from {{Neighbors}} about a {{Changing State}}.
\newblock {\em Review of Economic Studies}, 90(5):2326--2369.

\bibitem[DeGroot, 2004]{DeGroot-2004-}
DeGroot, M.~H. (2004).
\newblock {\em Optimal {{Statistical Decisions}}}.
\newblock Wiley, first edition.

\bibitem[Frankel and Kamenica, 2019]{Frankel-Kamenica-2019-AER}
Frankel, A. and Kamenica, E. (2019).
\newblock Quantifying {{Information}} and {{Uncertainty}}.
\newblock {\em American Economic Review}, 109(10):3650--3680.

\bibitem[Fudenberg et~al., 2018]{Fudenberg-etal-2018-AER}
Fudenberg, D., Strack, P., and Strzalecki, T. (2018).
\newblock Speed, {{Accuracy}}, and the {{Optimal Timing}} of {{Choices}}.
\newblock {\em American Economic Review}, 108(12):3651--3684.

\bibitem[H{\'e}bert and Woodford, 2023]{Hebert-Woodford-2023-JET}
H{\'e}bert, B. and Woodford, M. (2023).
\newblock Rational inattention when decisions take time.
\newblock {\em Journal of Economic Theory}, 208:105612.

\bibitem[Hellwig and Veldkamp, 2009]{Hellwig-Veldkamp-2009-REStud}
Hellwig, C. and Veldkamp, L. (2009).
\newblock Knowing {{What Others Know}}: {{Coordination Motives}} in {{Information Acquisition}}.
\newblock {\em Review of Economic Studies}, 76(1):223--251.

\bibitem[Immorlica et~al., 2021]{Immorlica-etal-2021-LeibnizInt.Proc.Inform.LIPIcs}
Immorlica, N., Kash, I.~A., and Lucier, B. (2021).
\newblock Buying {{Data Over Time}}: {{Approximately Optimal Strategies}} for {{Dynamic Data-Driven Decisions}}.
\newblock In {\em Leibniz {{International Proceedings}} in {{Informatics}} ({{LIPIcs}})}, volume 185, pages 77:1--77:14, Dagstuhl, Germany.

\bibitem[Karatzas and Shreve, 1988]{Karatzas-Shreve-1988-}
Karatzas, I. and Shreve, S.~E. (1988).
\newblock {\em Brownian {{Motion}} and {{Stochastic Calculus}}}, volume 113 of {\em Graduate {{Texts}} in {{Mathematics}}}.
\newblock Springer New York, New York, NY.

\bibitem[Liang et~al., 2018]{Liang-etal-2018-}
Liang, A., Mu, X., and Syrgkanis, V. (2018).
\newblock Optimal and {{Myopic Information Acquisition}}.

\bibitem[Liang et~al., 2022]{Liang-etal-2022-ECTA}
Liang, A., Mu, X., and Syrgkanis, V. (2022).
\newblock Dynamically {{Aggregating Diverse Information}}.
\newblock {\em Econometrica}, 90(1):47--80.

\bibitem[Ma{\'c}kowiak et~al., 2023]{Mackowiak-etal-2023-JEL}
Ma{\'c}kowiak, B., Mat{\v e}jka, F., and Wiederholt, M. (2023).
\newblock Rational {{Inattention}}: {{A Review}}.
\newblock {\em Journal of Economic Literature}, 61(1):226--273.

\bibitem[Morris and Strack, 2019]{Morris-Strack-2019-}
Morris, S. and Strack, P. (2019).
\newblock The {{Wald Problem}} and the {{Equivalence}} of {{Sequential Sampling}} and {{Static Information Costs}}.

\bibitem[Moscarini and Smith, 2001]{Moscarini-Smith-2001-ECTA}
Moscarini, G. and Smith, L. (2001).
\newblock The {{Optimal Level}} of {{Experimentation}}.
\newblock {\em Econometrica}, 69(6):1629--1644.

\bibitem[Myatt and Wallace, 2012]{Myatt-Wallace-2012-REStud}
Myatt, D.~P. and Wallace, C. (2012).
\newblock Endogenous {{Information Acquisition}} in {{Coordination Games}}.
\newblock {\em Review of Economic Studies}, 79(1):340--374.

\bibitem[Pomatto et~al., 2023]{Pomatto-etal-2023-AER}
Pomatto, L., Strack, P., and Tamuz, O. (2023).
\newblock The {{Cost}} of {{Information}}: {{The Case}} of {{Constant Marginal Costs}}.
\newblock {\em American Economic Review}, 113(5):1360--1393.

\bibitem[Sims, 2003]{Sims-2003-JME}
Sims, C.~A. (2003).
\newblock Implications of rational inattention.
\newblock {\em Journal of Monetary Economics}, 50(3):665--690.

\bibitem[Steiner et~al., 2017]{Steiner-etal-2017-ECTA}
Steiner, J., Stewart, C., and Mat{\v e}jka, F. (2017).
\newblock Rational {{Inattention Dynamics}}: {{Inertia}} and {{Delay}} in {{Decision-Making}}.
\newblock {\em Econometrica}, 85(2):521--553.

\bibitem[Stokey and Lucas, 1989]{Stokey-Lucas-1989-}
Stokey, N.~L. and Lucas, R.~E. (1989).
\newblock {\em Recursive methods in economic dynamics}.
\newblock Harvard University Press, Cambridge, MA.

\bibitem[Wald, 1945]{Wald-1945-AnnMathStat}
Wald, A. (1945).
\newblock Sequential {{Tests}} of {{Statistical Hypotheses}}.
\newblock {\em Annals of Mathematical Statistics}, 16(2):117--186.

\bibitem[Wald, 1947]{Wald-1947-ECTA}
Wald, A. (1947).
\newblock Foundations of a {{General Theory}} of {{Sequential Decision Functions}}.
\newblock {\em Econometrica}, 15(4):279.

\end{thebibliography}
}

\appendix

\clearpage
\section{Proofs}
\label{sec:proofs}

\counterwithin{lemma}{section}
\setcounter{lemma}{0}

\subsection{Ancillary results}

\begin{lemma}
    \label{lem:multivariate-normal-conditional}
    Let~$n\in\N$, and let~$z\sim\Ncal(\mu,\Sigma)$ be multivariate normally distributed with mean~$\mu\in\R^n$ and covariance matrix~$\Sigma\in(\R\cup\{\infty\})^{n\times n}$.
    Let~$z=(z_1,z_2)$ be a partition of~$z$ into vectors~$z_1\in\R^m$ and~$z_2\in\R^{n-m}$ with~$1\le m<n$, and let~$\mu=(\mu_1,\mu_2)$ and
    \[ \Sigma=\begin{pmatrix}
        \Sigma_{11} & \Sigma_{12} \\
        \Sigma_{21} & \Sigma_{22}
    \end{pmatrix} \]
    be the corresponding partitions of~$\mu$ and~$\Sigma$.
    If~$\Sigma_{22}$ is invertible, then
    \[ z_1\mid z_2\sim\Ncal\left(\mu_1+\Sigma_{12}\Sigma_{22}^{-1}(z_2-\mu_2),\ \Sigma_{11}-\Sigma_{12}\Sigma_{22}^{-1}\Sigma_{21}\right). \]
\end{lemma}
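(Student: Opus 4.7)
My plan is to use the standard ``decorrelation'' trick: construct a linear combination of $z_1$ and $z_2$ that is uncorrelated with (hence, by joint normality, independent of) $z_2$, then read off the conditional distribution from that independence.

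Concretely, I would define $y\equiv z_1-\Sigma_{12}\Sigma_{22}^{-1}(z_2-\mu_2)$, which is well-defined because $\Sigma_{22}$ is assumed invertible. Since $y$ is an affine function of $(z_1,z_2)$, and $(y,z_2)$ is therefore an affine function of $z$, joint normality of $(y,z_2)$ follows from the joint normality of $z$. The first key computation is $\E[y]=\mu_1$. The second is
\[ \Cov(y,z_2)=\Cov(z_1,z_2)-\Sigma_{12}\Sigma_{22}^{-1}\Cov(z_2,z_2)=\Sigma_{12}-\Sigma_{12}\Sigma_{22}^{-1}\Sigma_{22}=0. \]
Zero covariance plus joint normality gives independence of $y$ and $z_2$. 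The third computation is the variance of $y$, which expands to
\[ \Var(y)=\Sigma_{11}-\Sigma_{12}\Sigma_{22}^{-1}\Sigma_{21}-\Sigma_{12}\Sigma_{22}^{-1}\Sigma_{21}+\Sigma_{12}\Sigma_{22}^{-1}\Sigma_{22}\Sigma_{22}^{-1}\Sigma_{21}=\Sigma_{11}-\Sigma_{12}\Sigma_{22}^{-1}\Sigma_{21}, \]
so $y\sim\Ncal\left(\mu_1,\Sigma_{11}-\Sigma_{12}\Sigma_{22}^{-1}\Sigma_{21}\right)$ marginally.

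To finish, I rewrite $z_1=y+\Sigma_{12}\Sigma_{22}^{-1}(z_2-\mu_2)$ and condition on $z_2$. Since $y$ is independent of $z_2$, the conditional distribution of $z_1$ given $z_2$ is that of $y$ shifted by the (now deterministic) vector $\Sigma_{12}\Sigma_{22}^{-1}(z_2-\mu_2)$. Adding this shift to the mean of $y$ gives the claimed conditional mean $\mu_1+\Sigma_{12}\Sigma_{22}^{-1}(z_2-\mu_2)$, while the conditional covariance is unchanged at $\Sigma_{11}-\Sigma_{12}\Sigma_{22}^{-1}\Sigma_{21}$.

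There is essentially no hard step: the result is a classical fact and the proof is purely algebraic once one has the idea of subtracting off the ``regression'' of $z_1$ on $z_2$. The only point that deserves care is invoking ``zero covariance implies independence,'' which requires joint (not merely marginal) normality of $(y,z_2)$; this is why I am careful to note that $(y,z_2)$ is an affine image of the jointly normal vector $z$. The possibility of infinite entries in $\Sigma_{11}$ is harmless here because the argument never inverts $\Sigma_{11}$, only $\Sigma_{22}$, which is assumed invertible (hence finite).
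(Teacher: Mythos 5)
Your proof is correct. Note that the paper does not actually prove this lemma---it simply cites standard references (Bishop 2006; DeGroot 2004), which typically establish the result by completing the square in the exponent of the joint density. Your decorrelation argument---setting $y\equiv z_1-\Sigma_{12}\Sigma_{22}^{-1}(z_2-\mu_2)$, checking $\Cov(y,z_2)=0$, and invoking the fact that uncorrelated jointly normal vectors are independent---is a standard alternative that is arguably cleaner: it avoids density manipulations entirely and makes transparent why the conditional covariance does not depend on the realization of $z_2$. Your covariance and variance computations check out, and you are right to flag that independence requires \emph{joint} normality of $(y,z_2)$, which you secure by noting it is an affine image of $z$. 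The one point worth a little more care is the paper's convention that $\Sigma$ may have infinite diagonal entries (used to encode uninformative signals with $p_i=0$): strictly speaking $z$ is then not a multivariate normal vector in the usual sense, so the lemma should be read as applying after such components are dropped or as a limiting convention; your observation that the argument only ever inverts $\Sigma_{22}$ is the right instinct, but it does not by itself make ``$z\sim\Ncal(\mu,\Sigma)$ with infinite entries in $\Sigma_{11}$'' a well-posed hypothesis. This is a quirk of the paper's statement rather than a gap in your argument.
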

\begin{proof}\let\qed\relax
    See, e.g., \citet[p.\! 87]{Bishop-2006-} or \citet[p.\! 55]{DeGroot-2004-}.
\end{proof}

\begin{lemma}
    \label{lem:posterior}
    Let~$t\ge0$ and~$n\in\N$.
    Let~$y_n=(s_1,s_2,\ldots,s_n)\in\R^n$ be the (random) vector of signals in~$\Hcal_n$, let~$\Sigma_n\in(\R\cup\{\infty\})^{n\times n}$ be the covariance matrix with~${ij}^\text{th}$ entry
    \begin{align*}
        \left[\Sigma_n\right]_{ij}
        &= \Cov(s_i,s_j) \\
        &= \begin{cases}
            \infty & \text{if}\ i=j\ \text{and}\ p_i=0 \\
            \Var(\theta(t_i))+1/p_i & \text{if}\ i=j\ \text{and}\ p_i>0 \\
            \Cov(\theta(t_i),\theta(t_j)) & \text{otherwise},
        \end{cases}
    \end{align*}
    and let~$w_{nt}\in\R^n$ be the vector with~$i^\text{th}$ component~$[w_{nt}]_i=\Cov(\theta(t),\theta(t_i))$.
    If~$\Sigma_n$ is invertible, then
    \[ \theta(t)\mid\Hcal_n\sim\Ncal\left(\E[\theta(t)]+w_{nt}^T\Sigma_n^{-1}\left(y_n-\E[y_n]\right),\Var(\theta(t))-w_{nt}^T\Sigma_n^{-1}w_{nt}\right). \]
\end{lemma}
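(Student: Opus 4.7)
The plan is to identify Lemma~\ref{lem:posterior} as a direct application of the multivariate normal conditional formula (Lemma~\ref{lem:multivariate-normal-conditional}) to the joint vector of the state value $\theta(t)$ and the signal vector $y_n$. The work is almost entirely in setting up the correct partitioned mean and covariance.

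First, I would form the augmented vector $z = (\theta(t), s_1, s_2, \ldots, s_n) \in \R^{n+1}$ and verify that it is multivariate normally distributed. Since $\{\theta(t)\}_{t \ge 0}$ is a Gaussian process and each signal $s_i = \theta(t_i) + \epsilon_i$ is an affine function of that process together with an independent Gaussian noise, $z$ is a linear transformation of a jointly Gaussian vector, and is therefore itself jointly Gaussian. Its mean is $(\E[\theta(t)], \E[y_n])$.

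Next, I would compute the block-partitioned covariance of $z$. The top-left scalar block is $\Var(\theta(t))$. The off-diagonal block is $w_{nt}$, since for each $i$ the independence of $\epsilon_i$ from~$\{\theta(t)\}_{t\ge0}$ gives $\Cov(\theta(t), s_i) = \Cov(\theta(t), \theta(t_i) + \epsilon_i) = \Cov(\theta(t), \theta(t_i)) = [w_{nt}]_i$. The bottom-right block is $\Sigma_n$: for $i \ne j$ the signals' independent errors give $\Cov(s_i, s_j) = \Cov(\theta(t_i), \theta(t_j))$, and for $i = j$ we have $\Var(s_i) = \Var(\theta(t_i)) + 1/p_i$ (interpreted as $\infty$ when $p_i = 0$), matching the definition in the statement.

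Finally, I would apply Lemma~\ref{lem:multivariate-normal-conditional} to the partition $z_1 = \theta(t)$, $z_2 = y_n$, with $\Sigma_{11} = \Var(\theta(t))$, $\Sigma_{12} = w_{nt}^T$, $\Sigma_{21} = w_{nt}$, $\Sigma_{22} = \Sigma_n$. The invertibility hypothesis of that lemma matches the invertibility hypothesis on $\Sigma_n$ here, and substituting into the generic formula immediately yields the claimed posterior mean $\E[\theta(t)] + w_{nt}^T \Sigma_n^{-1}(y_n - \E[y_n])$ and posterior variance $\Var(\theta(t)) - w_{nt}^T \Sigma_n^{-1} w_{nt}$.

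The only subtle point, and the main obstacle, is accommodating zero-precision signals, which create infinite diagonal entries in $\Sigma_n$. Lemma~\ref{lem:multivariate-normal-conditional} allows $\Sigma$ to take values in $\R \cup \{\infty\}$, so the cleanest path is to interpret the corresponding rows and columns of $\Sigma_n^{-1}$ as vanishing in the limit $p_i \downarrow 0$ (equivalently, observing a signal with zero precision adds no information, so such signals can be deleted from $\Ical_n$ without changing the conditional distribution). Beyond this bookkeeping, the proof is a one-line invocation of the conditional formula.
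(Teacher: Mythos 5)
Your proposal is correct and follows essentially the same route as the paper: form the joint Gaussian vector of $\theta(t)$ and $y_n$, identify the partitioned covariance blocks $\Var(\theta(t))$, $w_{nt}$, and $\Sigma_n$, and invoke Lemma~\ref{lem:multivariate-normal-conditional}. You in fact spell out more detail than the paper does, verifying each covariance entry and flagging the interpretation of infinite diagonal entries for zero-precision signals, both of which the paper leaves implicit.
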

\begin{proof}
    Consider the vector~$z=(y_n,\theta(t))\in\R^{n+1}$ obtained by appending~$\theta(t)$ to~$y_n$.
    This vector is multivariate normally distributed with mean~$\E[z]=(\E[y_n],\E[\theta(t)])$ and covariance matrix
    \[ \Var(z)=\begin{bmatrix}
        \Var(\theta(t)) & w_{nt}^T \\
        w_{nt}^T & \Sigma_n
    \end{bmatrix}. \]
    The result follows immediately from Lemma~\ref{lem:multivariate-normal-conditional}.
\end{proof}

\begin{lemma}
    \label{lem:residual-variance-markov}
    Suppose~$\{\theta(t)\}_{t\ge0}$ follows a Gauss-Markov process.
    Define
    \[ \beta_n\equiv\frac{\Cov(\theta(t_n),\theta(t_{n-1}))}{\Var(\theta(t_{n-1}))} \]
    and
    \[ \gamma_n\equiv\Var(\theta(t_n)\mid\theta(t_{n-1})) \]
    for each~$n\in\N\setminus\{1\}$.
    Then
    \begin{equation}
        R_n=\begin{cases}
            \Var(\theta(t_1)) & \text{if}\ n=1 \\
            \beta_n^2\Var(\theta(t_{n-1})\mid\Hcal_{n-1})+\gamma_n & \text{if}\ n>1.
        \end{cases} \label{eq:residual-variance-markov}
    \end{equation}
    Moreover, if~$n>1$ and~$p_i=p_i^*$ for each~$i<n$, then
    \begin{equation}
        R_n=\min\left\{\Var(\theta(t_1))\prod_{j=0}^{n-2}\beta_{n-j}^2+\sum_{j=0}^{n-2}\gamma_{n-j}\prod_{k=0}^{j-1}\beta_{n-k}^2\right\}\cup\left\{\sqrt{c}\prod_{j=0}^{i-1}\beta_{n-j}^2+\sum_{j=0}^{i-1}\gamma_{n-j}\prod_{k=0}^{j-1}\beta_{n-k}^2\right\}_{i=1}^{n-1}. \label{eq:residual-variance-markov-closed}
    \end{equation}
\end{lemma}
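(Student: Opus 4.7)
The plan is to first prove the one-step recursion \eqref{eq:residual-variance-markov} using the Gauss-Markov property, then unwind it by induction on $n$ to obtain the closed form \eqref{eq:residual-variance-markov-closed}.

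For the first identity, I would observe that $\Ical_1\setminus\{s_1\}=\emptyset$, so conditioning is vacuous and $R_1=\Var(\theta(t_1))$; and that $\Ical_n\setminus\{s_n\}=\Ical_{n-1}$ for $n>1$, so $R_n=\Var(\theta(t_n)\mid\Ical_{n-1})$. Because $\{\theta(t)\}_{t\ge0}$ is jointly Gaussian, I can write
\[ \theta(t_n)=a_n+\beta_n\theta(t_{n-1})+u_n, \]
where $a_n\in\R$ is a constant and $u_n\sim\Ncal(0,\gamma_n)$ is independent of $\theta(t_{n-1})$. The Markov property strengthens this to $u_n$ being independent of the entire vector $(\theta(t_1),\ldots,\theta(t_{n-1}))$; and since signal errors are independent of the state path, $u_n$ is also independent of $\Ical_{n-1}$. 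Taking $\Var(\cdot\mid\Ical_{n-1})$ of the decomposition then yields \eqref{eq:residual-variance-markov}.

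For the closed form, I would invoke Lemma~\ref{lem:solution} under the hypothesis $p_i=p_i^*$ for $i<n$ to write $\Var(\theta(t_i)\mid\Ical_i)=\min\{R_i,\sqrt{c}\}$, and substitute this into \eqref{eq:residual-variance-markov}. Writing $f_i(x)\equiv\beta_i^2x+\gamma_i$, the recursion becomes $R_n=f_n(\min\{R_{n-1},\sqrt{c}\})=\min\{f_n(R_{n-1}),f_n(\sqrt{c})\}$, using that $f_n$ is non-decreasing (as $\beta_n^2\ge0$). Iterating this downward and using the base case $R_1=\Var(\theta(t_1))$ gives, by a short induction,
\[ R_n=\min\Bigl(\{(f_n\circ\cdots\circ f_2)(R_1)\}\cup\{(f_n\circ\cdots\circ f_{n-i+1})(\sqrt{c})\}_{i=1}^{n-1}\Bigr). \]
Finally, I would expand each affine composition: a straightforward calculation shows the leading coefficient of $f_n\circ\cdots\circ f_k$ is $\prod_{j=k}^n\beta_j^2$ and its constant term is $\sum_{j=k}^n\gamma_j\prod_{\ell=j+1}^n\beta_\ell^2$. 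Reindexing $j\mapsto n-j$ and $\ell\mapsto n-\ell$ converts these into the products $\prod_{j=0}^{\cdot}\beta_{n-j}^2$ and sums $\sum_{j=0}^{\cdot}\gamma_{n-j}\prod_{k=0}^{j-1}\beta_{n-k}^2$ appearing in \eqref{eq:residual-variance-markov-closed}.

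The substantive content lives entirely in the Gauss-Markov decomposition; the rest is bookkeeping. The one subtlety worth double-checking is that $u_n$'s independence from $\theta(t_{n-1})$ promotes all the way to independence from $\Ical_{n-1}$: this needs the Markov property (to extend independence from $\theta(t_{n-1})$ to all earlier states) together with the assumed independence of signal noise from $\{\theta(t)\}_{t\ge0}$ (to extend it further to the signals themselves). With that in hand, the monotonicity of each $f_i$ makes the recursion commute with $\min$, so the induction is essentially mechanical and the main remaining work is careful index-shifting.
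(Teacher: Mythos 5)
Your proposal is correct and follows essentially the same route as the paper: establish the one-step recursion $R_n=\beta_n^2\Var(\theta(t_{n-1})\mid\Ical_{n-1})+\gamma_n$ from the Gauss--Markov structure, then substitute $\Var(\theta(t_{n-1})\mid\Ical_{n-1})=\min\{R_{n-1},\sqrt{c}\}$ from Lemma~\ref{lem:solution} and unwind the resulting $\min$-recursion by induction. The only (immaterial) difference is that you obtain the recursion from the innovation decomposition $\theta(t_n)=a_n+\beta_n\theta(t_{n-1})+u_n$, carefully promoting the independence of $u_n$ to all of $\Ical_{n-1}$, whereas the paper applies the law of total variance conditional on $\theta(t_{n-1})$ and then uses the Markov property together with Lemma~\ref{lem:multivariate-normal-conditional}; both arguments rest on exactly the same facts.
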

\begin{proof}
    Now~$\Hcal_1\setminus\{s_1\}$ is empty and so~$R_1=\Var(\theta(t_1))$ by definition.

    Suppose~$n>1$.
    If~$\{\theta(t)\}_{t\ge0}$ follows a Markov process, then~$\theta(t_n)$ is conditionally independent of~$\Hcal_{n-1}$ given~$\theta(t_{n-1})$, and so
    \begin{align*}
        \Var(\theta(t_n)\mid\Hcal_{n-1})
        &= \Var(\E[\theta(t_n)\mid\Hcal_{n-1},\theta(t_{n-1})]\mid\Hcal_{n-1})+\E[\Var(\theta(t_n)\mid\Hcal_{n-1},\theta(t_{n-1}))\mid\Hcal_{n-1}] \\
        &= \Var(\E[\theta(t_n)\mid\theta(t_{n-1})]\mid\Hcal_{n-1})+\E[\Var(\theta(t_n)\mid\theta(t_{n-1}))\mid\Hcal_{n-1}]
    \end{align*}
    by the law of total variance.
    But if~$\{\theta(t)\}_{t\ge0}$ follows a Gaussian process, then~$\Var(\theta(t_n)\mid\theta(t_{n-1}))$ is non-stochastic and
    \[ \E[\theta(t_n)\mid\theta(t_{n-1})]=\E[\theta(t_n)]+\frac{\Cov(\theta(t_n),\theta(t_{n-1}))}{\Var(\theta(t_{n-1}))}(\theta(t_{n-1})-\E[\theta(t_{n-1})]) \]
    by Lemma~\ref{lem:multivariate-normal-conditional}.
    Thus
    \begin{align*}
        \Var(\theta(t_n)\mid\Hcal_{n-1})
        &= \Var\left(\E[\theta(t_n)]+\frac{\Cov(\theta(t_n),\theta(t_{n-1}))}{\Var(\theta(t_{n-1}))}(\theta(t_{n-1})-\E[\theta(t_{n-1})])\mid\Hcal_{n-1}\right) \\
        &\quad +\Var(\theta(t_n)\mid\theta(t_{n-1})) \\
        &= \left(\frac{\Cov(\theta(t_n),\theta(t_{n-1}))}{\Var(\theta(t_{n-1}))}\right)^2\Var(\theta(t_{n-1})\mid\Hcal_{n-1})+\Var(\theta(t_n)\mid\theta(t_{n-1}))
    \end{align*}
    since~$\E[\theta(t_n)]$, $\Cov(\theta(t_n),\theta(t_{n-1}))$, $\Var(\theta(t_{n-1}))$, and~$\E[\theta(t_{n-1})]$ are non-stochastic.
    Equation~\eqref{eq:residual-variance-markov} follows.

    Now suppose~$n>1$ and~$p_i=p_i^*$ for each~$i<n$.
    Then, by Lemma~\ref{lem:solution}, we have
    \begin{align*}
        R_n
        &= \beta_n^2\min\{R_{n-1},\sqrt{c}\}+\gamma_n \\
        &= \min\{\beta_n^2R_{n-1}+\gamma_n,\beta_n^2\sqrt{c}+\gamma_n\}.
    \end{align*}
    Substituting this expression into itself recursively gives
    \begin{align*}
        R_n
        &= \min\{\beta_n^2\min\{\beta_{n-1}^2R_{n-2}+\gamma_{n-1},\beta_{n-1}^2\sqrt{c}+\gamma_{n-1}\}+\gamma_n,\beta_n^2\sqrt{c}+\gamma_n\} \\
        &= \min\{\beta_n^2\beta_{n-1}^2R_{n-2}+\beta_n^2\gamma_{n-1}+\gamma_n, \\
        &\hspace{3.7em} \beta_n^2\beta_{n-1}^2\sqrt{c}+\beta_n^2\gamma_{n-1}+\gamma_n, \\
        &\hspace{3.7em} \beta_n^2\sqrt{c}+\gamma_n\} \\
        &= \min\{\beta_n^2\beta_{n-1}^2\beta_{n-2}^2R_{n-3}+\beta_n^2\beta_{n-1}^2\gamma_{n-2}+\beta_n^2\gamma_{n-1}+\gamma_n, \\
        &\hspace{3.7em} \beta_n^2\beta_{n-1}^2\beta_{n-2}^2\sqrt{c}+\beta_n^2\beta_{n-1}^2\gamma_{n-2}+\beta_n^2\gamma_{n-1}+\gamma_k, \\
        &\hspace{3.7em} \beta_n^2\beta_{n-1}^2\sqrt{c}+\beta_n^2\gamma_{n-1}+\gamma_n, \\
        &\hspace{3.7em} \beta_n^2\sqrt{c}+\gamma_n\} \\
        &\hspace{0.6em} \vdots \\
        &= \min\left\{R_1\prod_{j=0}^{n-2}\beta_{n-j}^2+\sum_{j=0}^{n-2}\gamma_{n-j}\prod_{k=0}^{j-1}\beta_{n-k}^2\right\}\cup\left\{\sqrt{c}\prod_{j=0}^{i-1}\beta_{n-j}^2+\sum_{j=0}^{i-1}\gamma_{n-j}\prod_{k=0}^{j-1}\beta_{n-k}^2\right\}_{i=1}^{n-1}.\qedhere
    \end{align*}
\end{proof}

\subsection{Main results}

\ifbodyproofs\else\fi

\ifbodyproofs\else\begin{proof}[Proof of Lemma~\ref{lem:residual-variance-brownian}]
    Now~$R_1=\Var(\theta(t_1))=\sigma_0^2+\sigma^2t_1$ by definition.
    
    Suppose~$n>1$.
    The Brownian motion is a Gauss-Markov process, so we can use Lemma~\ref{lem:residual-variance-markov} to characterize~$R_n$.
    Define~$\beta_n$ and~$\gamma_n$ as in that lemma.
    Then
    \begin{align*}
        \beta_n
        &= \frac{\sigma_0^2+\sigma^2\min\{t_n,t_{n-1}\}}{\sigma_0^2+\sigma^2t_{n-1}} \\
        &= 1
    \end{align*}
    and
    \begin{align*}
        \gamma_n
        &\overset{\star}{=} \Var(\theta(t_n))-\frac{\left(\Cov(\theta(t_n),\theta(t_{n-1}))\right)^2}{\Var(\theta(t_{n-1}))} \\
        &= \sigma_0^2+\sigma^2t_n-\frac{\left(\sigma_0^2+\sigma^2\min\{t_n,t_{n-1}\}\right)^2}{\sigma_0^2+\sigma^2t_{n-1}} \\
        &= \sigma^2(t_n-t_{n-1}),
    \end{align*}
    where~$\star$ holds by Lemma~\ref{lem:multivariate-normal-conditional}.
    The result follows from Lemma~\ref{lem:residual-variance-markov}.
\end{proof}
\fi

\ifbodyproofs\else\fi

\ifbodyproofs\else\fi

\ifbodyproofs\else\fi

\ifbodyproofs\else\fi

\end{document}